\documentclass[12pt,a4paper]{article}
\usepackage[utf8]{inputenc}
\usepackage[T1]{fontenc}
\usepackage{textcomp}

\usepackage{amsmath,amssymb,amsthm,amsfonts}
\usepackage[dvips]{graphicx}
\usepackage{epsfig}
\usepackage{caption}
\usepackage{subcaption}
\usepackage{float}
\usepackage{verbatim}
\usepackage[font=small]{caption}
\usepackage{soul}
\usepackage[normalem]{ulem}
\usepackage{physics}
\usepackage{multirow}
\usepackage{url}
\usepackage{enumerate}
\usepackage{physics}
\usepackage{mathtools}
\usepackage{lmodern}


\bibliographystyle{elsarticle-num}

\usepackage{color}
\usepackage{xcolor}

\textheight=23.5cm \textwidth=17cm \topmargin=-0.8cm
\oddsidemargin=-0.4cm \evensidemargin=-0.4cm

\def\a{\alpha}

\def\e{\epsilon}

\def\p{\partial}

\def\({\text{\huge (}}
\def\){\text{\huge )}}

\def\]{\text{\huge ]}}
\def\[{\text{\huge [}}

\newcommand\C{\hat{c}}
\newcommand\q{\hat{q}}
\newcommand\x{\hat{x}}
\newcommand\T{\hat{t}}





\newcommand{\bi}{\begin{itemize}}
\newcommand{\ei}{\end{itemize}}
\newcommand{\be}{\begin{equation}}
\newcommand{\ee}{\end{equation}}
\newcommand{\ba}{\begin{align}}
\newcommand{\ea}{\end{align}}

\newcommand\nc{\newcommand}
\nc\pad[2]{\frac{\p #1}{\p #2}} \nc\padd[2]{\frac{\p^2 #1}{\p
{#2}^2}} \nc\nd[2]{\frac{d #1}{d #2}} \nc\pat[2]{\frac{D #1}{D
#2}} \nc\ov{\overline} \nc\degree{^{\circ}} \nc\ord[1]{{\cal
O}(#1)} \nc\ra{\rightarrow} \nc\Ra{\Rightarrow} \nc\dint{{\mbox ~
d}}

\DeclareMathOperator{\Pe}{Pe^{-1}}       
\DeclareMathOperator{\Da}{Da}		

\newcommand{\bea}{\begin{eqnarray}}
\newcommand{\eea}{\end{eqnarray}}
\newcommand{\beas}{\begin{eqnarray*}}
\newcommand{\eeas}{\end{eqnarray*}}

\newtheorem{prop}{Proposition}

\title{Analysis of travelling-wave equations in sorption processes}

\author{M. Aguareles, J. Anglada-Lloveras,  E. Barrab\'es}




\begin{document}

\maketitle
\begin{abstract}
This work presents a mathematical model of an adsorption column to study the evolution of contaminant concentration and adsorbed quantity along the longitudinal axis of the filter. The model is formulated as a system of partial differential equations (PDEs) and analysed using a travelling-wave approach, which reduces the system to a second-order ordinary differential equation depending on the inverse P\'eclet number, typically a small parameter. 
By neglecting this parameter, the model is simplified via a singular perturbation to a leading-order approximation, which can be interpreted as a slow-fast system.
We rigorously justify this reduction by proving the persistence of the heteroclinic connection associated with the travelling wave. Using analytical continuation, we conclude that, at least for small values of the inverse P\'eclet number, the concentration profile transitions from a clean downstream state of the adsorbent matrix to fully upstream saturation. 
Numerical simulations are presented to validate the analytical results and to assess the accuracy of the reduced model. A sensitivity analysis demonstrates that the travelling-wave approximation remains remarkably robust for moderate values of the inverse P\'eclet number.

\end{abstract}
\section{Introduction}

The increasing concern over climate change and the necessity to reduce harmful gas emissions into the atmosphere call for the development of effective filtration systems to prevent environmental pollution by actively removing pollutants from gases and liquids. Adsorption columns have emerged as a crucial tool in capturing pollutants from various liquid or gaseous emissions. These systems consist of a column filled with a porous adsorbent material through which the contaminated fluid flows. In experimental settings, small tubes approximately 3 cm in height and 5 cm in diameter are filled with adsorbent material, typically a fine powder made of small grains. On the laboratory scale, one can assume that the pollutant molecules attach directly to the surfaces of these grains. This assumption is appropriate at the laboratory scale considered here, where the adsorbent particles are sufficiently small and surface effects dominate the adsorption process. Figure \ref{fig:column} illustrates this setup: polluted air is injected (at a fixed velocity, $\vb{u}$) at the inlet of the column (at $x=0$) and, as it flows through, the pollutant molecules adhere to the grains, thus releasing a clean fluid at the outlet (at $x=L$).
\begin{figure}
\begin{center}
	\includegraphics[height=6.5cm]{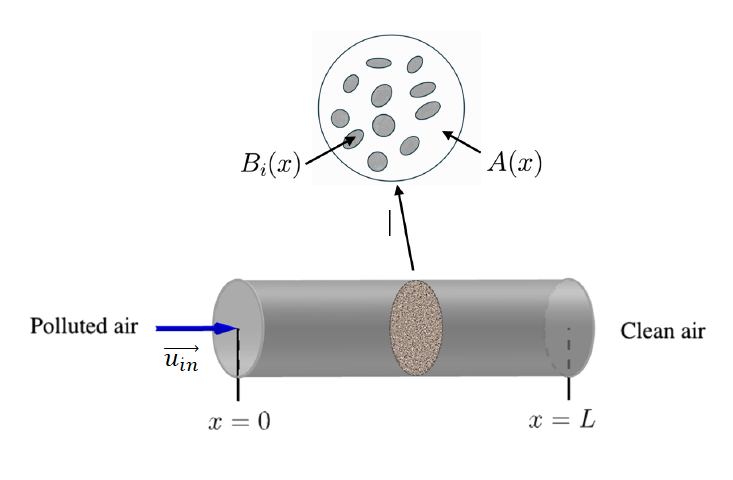}
\caption{Sketch of an adsorption column.} \label{fig:column}
\end{center}
\end{figure}

The development and analysis of mathematical models for adsorption columns are essential for enhancing and optimizing adsorption equipment. In the literature, one can find many semi-empirical models with a limited scope of applicability. 
For instance, in \cite{rok24}, the authors conduct a series of batch experiments to determine the adsorption rate. They use a set of ordinary differential equations to model the evolution of the concentration of palladium, nickel, and ruthenium in a liquid solvent. Batch experiments, also known as static systems, involve adding adsorbent material in the form of a fine powder to a solution containing components expected to be adsorbed by the grains of the adsorbent material. This mixture is continuously stirred, and the concentration evolution is periodically monitored. However, as noted in \cite{Tsing09}, the adsorption rates depend on the type of experiment, so the values obtained from batch experiments are generally not comparable to those obtained in a column experiment. Similarly, in \cite{Bai23}, the authors employ machine learning tools to infer the adsorption capacity of biochar to capture a specific type of antibiotic (tetracycline). Again, the models are based on specific experiments, and the validity of the results is restricted to this particular combination of molecule and adsorbent. In \cite{Bos19}, the authors highlight that, in certain cases, the Linear Driving Force model, which assumes that the adsorption rate is a linear function of the molecular concentration, is insufficient to describe the adsorption kinetics. Consequently, they derive a semi-empirical model to demonstrate this point. These examples highlight the importance of developing accurate and robust mathematical model for adsorption processes. In particular, such models are essential to infer parameters that cannot be directly measured, without relying exclusively on experimental fitting procedures.

Recently, many researchers have conducted comprehensive reviews of the field (see, for instance, \cite{Li18}, \cite{Patel2019} or \cite{Shafeeyan14}), leading to the creation of various mathematical models based on the physicochemical principles underlying adsorption processes. In \cite{Mondal19}, \cite{Myers20a}, \cite{Myers20b},\cite{AGUARELES2023}, and \cite{AUTON2024827}, for the first time, the authors rigorously derived a set of partial differential equations describing adsorption in a packed column, taking into account the fluid dynamics around and inside the adsorbent grains as well as providing the equations to include any possible heat transfer effects. These models are general enough to describe a wide range of applications, and they describe the correct way to upscale the results from the laboratory to the industrial scale. Also, in \cite{myers23}, the authors point out a set of errors that are commonly found in the literature due to a lack of consistency and mathematical rigour.

 Building on the models and simplifications developed in \cite{Myers20a}, \cite{Myers20b}, and \cite{AGUARELES2023}, other researchers have applied these models to address environmental challenges, such as mitigating pollution and toxicity in natural ecosystems (see, for instance, \cite{SOBOLEV}). Additionally, these models have been instrumental in the development of natural adsorbent materials, often derived from industrial waste, and in enhancing the efficiency of operating conditions in adsorption processes (see, for example, \cite{CHAJRI}, \cite{SADAIYAN}, and \cite{Valverde24}).

Two types of curves are typically obtained from experimental data to compare and assess the performance of adsorbents: isotherm curves and breakthrough curves. Isotherms are generated by injecting a constant flow of a contaminated fluid into the column until saturation is reached. At saturation, the concentration of the contaminant at the inlet matches the concentration at the outlet, as the adsorbent can no longer retain any more contaminant. At this point, the experiment is stopped, and the total mass of contaminant captured is measured. By conducting this experiment with different inlet concentrations while keeping the rest of the experimental conditions constant, a curve relating the inlet concentration to the total captured mass is produced. These curves, known as isotherms, are typically used to determine the adsorption-desorption rate and the maximum capacity of the adsorbent material. Moreover, it should be noted that isotherms only provide equilibrium information and do not describe the transient dynamics.

In contrast, the breakthrough curve is obtained by plotting the concentration at the outlet (breakthrough) as a function of time. This curve is generally used to compare numerical models with experimental data. In \cite{AGUARELES2023}, the authors derive a system of partial differential equations (PDEs) for the contaminant concentration in the fluid and for the amount of contaminant adsorbed at each point in the column over time, considering the physicochemical processes involved in adsorption and the fluid dynamics around the adsorbent grains. Using a travelling-wave approach, they reduce the system of PDEs to a set of ordinary differential equations (ODEs) for the contaminant concentration, which they further simplify into implicit algebraic equations for the breakthrough curves. These equations can be directly used to evaluate experiments. The approximate solutions derived are successfully validated by comparison with various experimental datasets. These approximate solutions rely on neglecting a small parameter (the inverse P\'eclet number), which represents a singular perturbation of the original system. However, in \cite{AGUARELES2023}, the original system of PDEs is not used to validate the approximate solution, which is instead used to infer the system parameters from two different data sets. In this paper, we aim to extend this work by rigorously proving that this approximation is indeed correct, i.e. the approximate system is, in fact, the limit of the original system of ODEs. Furthermore, we numerically solve the system of PDEs to show that travelling-wave profiles appear for a column with a typical laboratory  . We carry out these simulations for several different reaction exponents. Additionally, we perform a sensitivity analysis that demonstrates the robustness of the system. Specifically, we shall show that even for not so small values of the singular parameter, the agreement between the curves obtained by neglecting this parameter and those from the original system remains very good (in some cases we find relative errors of less than 3\% even for order one values of the inverse P\'eclet number). This is particularly relevant as it provides the necessary mathematical rigour, extending beyond mere comparison with specific numerical datasets and establishing the limits of validity for the valuable explicit expressions presented in \cite{AGUARELES2023}.

The paper is structured as follows: Section~\ref{sec:model} presents a detailed description of the model, which leads to a system of partial differential equations governing the evolution of the contaminant concentration. Additionally, numerical simulations are provided to demonstrate the validity of a travelling-wave approximation for a specific experiment and dataset. Section~\ref{sec:trav} derives a second-order ordinary differential equation for the travelling wave and rigorously proves the existence of solutions satisfying the prescribed boundary conditions. Section~\ref{sec:sens} conducts a systematic sensitivity analysis to determine the range of validity of the approximation obtained when the inverse P\'eclet number is neglected. Section~\ref{sec:pde} explores the existence of solutions of the system of partial differential equations in the form of travelling-wave fronts, providing substantial evidence supporting the accuracy of the travelling-wave approximation for a wide range of parameter values. Additionally, we briefly explore the effect of the initial conditions on the evolution of the concentration and adsorbed fraction. Finally, Section~\ref{sec:conc} concludes the findings of this research. 

All numerical computations were performed in MATLAB using double precision and standard built-in routines.

\section{Model description and numerical solutions}
\label{sec:model}
To mathematically describe adsorption columns, models are typically formulated as a system of partial differential equations (PDEs). These equations describe the contaminant concentration at any point in the fluid region of the column, $C$, and the amount of contaminant that has already been captured and adsorbed onto the surfaces of the solid grains, $C^\textrm{ad}$. Adsorption reactions can occur through simple Van der Waals forces or through chemical bonds between the contaminant molecule and the adsorbent material. The first type, usually known as physisorption, results in weaker adsorption forces, making the molecules more likely to detach and return to the fluid. The second type of adsorption mechanism is known as chemisorption, where adsorption occurs through stronger chemical bonds between the contaminant molecule and the adsorbent material. Adsorption reactions are usually of the form
\beas
\label{chemreac}
    \mathrm{m\, \mathcal{C} + n\, \mathcal{A}} \xrightleftharpoons[\kappa_\textrm{ad}]{\kappa_\textrm{de}}\mathrm{ Products } \, ,
\eeas
where m and n are the stoichiometric coefficients of the adsorption reaction, $\mathcal{C}$ is the contaminant molecule, and $\mathcal{A}$ is the adsorbent component. In the case of physisorption, m$=$n$=$1. The law of mass action states that the rate at which contaminant molecules attach to the adsorbent surface is the result of a balance between the adsorption and the desorption rates:
\begin{equation}
\label{eq:Cads}
    \pdv{C^\textrm{ad}}{t} =  \kappa_\textrm{ad}C^m(C^\textrm{ad}_{\textrm{sat}} - C^\textrm{ad})^n - \kappa_\textrm{de} (C^\textrm{sat}-C)^m(C_\textrm{ad})^n \, ,
	\end{equation}
where $C^\textrm{sat}, C^\textrm{ad}_\textrm{sat}$ are the saturation concentration of the contaminant in the fluid and in the adsorbent material, respectively. The exponents $m$ and $n$ are the global orders of the reaction, which in some cases coincide with the stoichiometric coefficients $\text{m}$, $\text{n}$ (see, for example, \cite{arnaut2006chemical} for a review of chemical kinetics laws). In general, the reaction rate coefficients, $\kappa_\textrm{ad}$ and $\kappa_\textrm{de}$, cannot be directly measured and they are fitted from experimental data. Also, they are known to differ depending on whether the experiment is conducted in batch or column mode (see \cite{Tsing09}). In adsorption columns, this is done by running experiments for a long time keeping a constant flux and concentration at the inlet, $c_\text{in}$, until the filter is saturated and an equilibrium is reached, i.e. the concentration at the outlet is the same as the one at the inlet and therefore the concentration is $c_\text{in}$ everywhere in the column. The final weight of the column, $M_f$, is compared with its initial weight when the filter is fresh, $M_i$, and the following equilibrium adsorbed fraction is obtained: 
\begin{equation*}
	q_e=\frac{M_f-M_i}{M_i}=\frac{(1-\phi)C^\textrm{ad}_e}{\rho_{b}}\, ,
\end{equation*}
where $\phi$ is the filter's void fraction (so $1-\phi$ is the filter's solid fraction) and $\rho_b$ is the density of the column measured as the total mass of adsorbent over the column volume.
The set of points $(c_\textrm{in},q_e)$ obtained from these experiments generates the isotherm curve of the adsorbent. Sips in \cite{Sips_1948} proposed isotherms of the form
\begin{equation}
	\label{eq:Sips}
	q_e = \frac{q_\textrm{max}(k_L c_\textrm{in}^m)^{1/n}}{1+(k_Lc_\text{in}^m)^{1/n}}\, ,
\end{equation}
where $q_\textrm{max}$ is the maximum adsorbed fraction of the column. By fitting the set of points $(c_\textrm{in},q_e)$ with the isotherm, $\kappa_L$ and $q_\textrm{max}$ are determined. We now show that this isotherm, in fact, corresponds to the equilibrium solutions of \eqref{eq:Cads} when the contaminant concentration is low enough. Indeed, if the concentration is far enough from its saturation value in the fluid, $(C^\textrm{sat}-C)\sim  C^\textrm{sat}$, equation \eqref{eq:Cads}, written in terms of the adsorbed fraction $q = (1-\phi)C^\textrm{ad}/\rho_b$, reads
\begin{alignat}{2}
\label{eq:q0}
   \pdv{q}{t} &=  k_\textrm{ad}C^m(q_\textrm{max} - q)^n - k_\textrm{de} C^m q^n \, ,\end{alignat}
where 
$$k_\text{de}=(C^\text{sat})^m\kappa_\text{de}\left(\frac{\rho_{b}}{1-\phi}\right)^{n-1},\quad k_\text{ad}=\kappa_\text{ad}\left(\frac{\rho_{b}}{1-\phi}\right)^{n-1}, $$ 
and whose equilibrium solutions, $q=q_e$, when the concentration is constant and fixed to its initial value, $C=c_\textrm{in}$, are given by \eqref{eq:Sips}, being $k_L=k_\textrm{ad}/k_\textrm{de}$. 

It is commonly assumed that, when dealing with trace amounts of a pollutant, the contaminant concentration in the fluid is low enough that its mass loss does not significantly impact the fluid flow. Consequently, the fluid velocity is assumed to remain constant and equal to that at the inlet. Under this assumption and by means of a volume average, the authors in \cite{AGUARELES2023} derive a system of partial differential equations for the main pollutant concentration and adsorbed fraction at each section of the filter, $c(x,t)$ and $q(x,t)$: 
\begin{subequations}
\label{eq:Sist_dim}
\begin{alignat}{2}
    \pdv{c}{t}+u_\text{in}\pdv{c}{x} &= D \pdv[2]{c}{x}-\frac{\rho_{b}}{\phi}\pdv{q}{t}\,, \quad && x\in(0,L),\quad t>0,\\
\label{eq:q}
    \pdv{q}{t} &=  k_\textrm{ad}c^m(q_{\textrm{max}} - q)^n - k_\textrm{de} q^n \, ,\quad&& x\in(0,L),\quad t>0 ,
\end{alignat}
where $L$ is the column length and $u_\text{in}$ is the constant inlet velocity. We note that, by definition, the adsorbed fraction satisfies $q(x,t)<q_e<1$. We also note that this model is only valid if the incompressibility assumption remains valid, which is the case in most air purification contexts where pressure drops are sufficiently low. As for the boundary conditions, a Dankwerts' condition is assumed at the inlet (see \cite{Danc53}), and a reduction of the Dankwerts' condition at the outlet (see \cite{Pear59}):
\bea
\label{eq:BC_dim}
u_\text{in} c_\text{in}=\left(u c-D_x\pad{c}{x}\right)\Bigg\vert_{x=0^+}  ,\qquad \pdv{c}{x}\Bigg\vert_{x=L^-} = 0\,,\quad \textrm{for $t>0$} .
\eea
Finally, to close the system, assuming that the filter is initially clean and that the air inside is free from contaminant, the initial conditions read 
\bea
c(x,0)=q(x,0)=0\, ,\quad \textrm{for $x\in(0,L)$}.
\eea
\end{subequations}

To reveal the effect of the different parameters of the system it is common practice to write equations \eqref{eq:Sist_dim} in non-dimensional form. Setting $c= c_\textrm{in}\C$, $q = q_\textrm{max} \q$, $x= \mathcal{L} \x$ and $t = \tau \T$, and defining
\begin{equation*}
   \mathcal{L}=\frac{\phi \tau u_\text{in} c_\textrm{in}}{\rho_b q_\textrm{max}}\,, \quad \tau = q_\textrm{max}^{1-n}(k_\text{ad} c_\textrm{in}^m + k_\text{de})^{-1}\,,\quad \Da=\frac{\mathcal{L}}{\tau u_\text{in}}\, ,\quad \Pe=\frac{D}{u_\text{in} \mathcal{L}}\,,\quad \alpha = \frac{k_\text{ad}c_\text{in}^m}{k_\text{ad}c_\text{in}^m + k_\text{de}}\,,
\end{equation*}
dropping the hats, system \eqref{eq:Sist_dim} becomes
\begin{subequations}
\label{eq:Sist_ND}
\begin{alignat}{2}
    \Da\pdv{c}{t}+\pdv{c}{x} &= \Pe\pdv[2]{c}{x}-\pdv{q}{t} \,, \quad && x\in(0,L),\quad t>0,\label{eq:c_dim}\\
    \pdv{q}{t} & =  \a c^m(1 - q)^n - (1-\a) q^n \, ,\quad&& x\in(0,L),\quad t>0 ,\label{eq:q_dim}\\
    \left(c-\Pe\pad{c}{x}\right)\Bigg\vert_{x=0^+}&=1\,, \qquad  \pdv{c}{x}\Bigg\vert_{x=L^-} =0 \quad && t>0 \, ,\label{eq:Dank}\\
       q(x,0)&=c(x,0)=0, && x\in(0,L)\, ,\label{eq:ic}
\end{alignat}    
\end{subequations}
where now $L$ represents the non-dimensional value of $L$, i.e., $L/\mathcal{L}$.
 The parameter $\Da>0$ is the Damk\"ohler number, which may be interpreted as the ratio of advection to reaction time-scales,  $\Pe>0$ is the inverse P\'eclet number, representing the relative importance of diffusion over advection and $\a\in[0,1]$ is an indicator of the relative importance of adsorption over desorption, so that $\a=1$ corresponds to a purely adsorbing system while $\a=0$ represents a purely desorbing one. In the physical systems we are considering, $\alpha \in (0, 1)$. The Damk\"ohler number ($\Da$) is typically extremely small, as the rate at which the contaminant is being adsorbed is significantly higher than the velocity at which the fluid flows along the column. Additionally, the diffusion of the pollutant in the fluid, in most applications, occurs at a significantly lower rate compared to the advection velocity. Consequently, the P\'eclet number becomes relatively large, so $\Pe$ is actually a small parameter. In fact, in most applications $\Da\ll \Pe\ll 1$ and this separation of scales reflects the fact that adsorption reactions typically occur much faster than advective transport, while molecular diffusion remains slow compared to advection in packed columns (refer to \cite{SULAYMON2014325} or \cite{Valverde24} and the references therein for further details).
 
In what follows, we show the results of a simulation to explore solutions of system \eqref{eq:Sist_ND}. The main goal is to show that the concentration and adsorbed fraction evolve like a travelling wave, using data from an actual laboratory experiment. In particular, we  use data for the adsorption of toluene extracted from \cite{CABRERACODONY2018565}, where: 
\begin{equation}
	\label{values}
			\phi=0.3357,\,\, u_\text{in}=0.13\, \text{m/s},\,\, c_\text{in}=2.835\, \text{kg/m}^3,\,\,\rho_b=377.25\, \text{kg/m}^3,\,\,\text{and}\,\, L=5.4 \cdot 10^{-3}\, \text{m},
\end{equation}
and where the adsorption exponents are known to be, in this case, $m=n=1$ (see \cite{CABRERACODONY2018565} and \cite{MYERS2023123660}). The remaining parameters, $k_\text{ad}$, $k_\text{de}$, and $q_\text{max}$, are fitted in \cite{MYERS2023123660}, where the authors obtain the following values: 
\begin{equation}
	\label{values2}
k_\text{ad}=1.13\, \text{m}^3\text{s}^{-1}\text{kg}^{-1},\,\, k_\text{de}=2.614\cdot 10^{-4} \,\text{s}^{-1},\,\,q_\text{max}=0.358. 
\end{equation}
Also, in \cite{MYERS2023123660}, the authors, based on the arguments presented in \cite{Lev99}, claim that diffusion in this experiment (which was not directly measured in \cite{CABRERACODONY2018565}) is of the order of $10^{-5}$. This value of the diffusion would lead to an inverse P\'eclet number of around $\Pe =0.1$. 

Figure \ref{Fig:figPDE1} illustrates the evolution of the concentration, $c(x,t)$, and adsorbed fraction, $q(x,t)$, obtained by solving system \eqref{eq:Sist_ND} implementing a Scharfetter-Gummel Discretization scheme, using the data provided in \eqref{values} and \eqref{values2} and with $\Pe =0.1$. Figure \ref{Fig:figPDE1} illustrates the evolution of the concentration $c(x,t)$ and adsorbed fraction $q(x,t)$ obtained by solving system \eqref{eq:Sist_ND} using a Scharfetter-Gummel Discretisation scheme. The data provided in \eqref{values} and \eqref{values2} and with $\Pe =0.1$ are used. The non-dimensional length of the column is 18.88. However, a shorter spatial span has been chosen to properly show the formation of travelling waves. This demonstrates that the actual column length is sufficient for the waves to fully develop. The solutions are plotted in dimensional form. The two figures in the first row depict the evolution of the profiles of $c(x,t)$ and $q(x,t)$ over time. It is evident that, after an initial transient, the solutions maintain their shape as they approach the end of the $x$-domain. To ascertain the validity of this observation, the curves in the second row are shifted to ensure that they intersect at the point $(x-x_{1/2},c)=(0,1/2)$, for the concentration, and $(x-x_{1/2},q)=(0,q_\text{e}/2)$, for the adsorbed fraction, where $q_\text{e}$ is given in \eqref{eq:Sips} and it corresponds to the equilibrium value of $q$ when $c=c_\text{in}$. To eliminate the initial transient, the curves before $0.4$ times the final time are excluded from this second row. We note that the adsorbed fraction, $q(x,t)$, seems to have a longer transient. Finally, on the third row, we present evidence demonstrating that the profiles of the second row evolve at a constant velocity. To achieve this, we track points $(x,t)$ with a fixed value for either the concentration or the adsorbed fraction, i.e., $c(x,t)=c^*$ (left figure) and $q(x,t)=q^*$ (right figure). Specifically, we select three distinct values, $c^*=\{1/4, 1/2, 3/4\}$, and $q^*=\{1/4q_e, 1/2q_e, 3/4q_e\}$. If the profiles evolve at a constant speed, they should correspond to functions such as $F(\eta)= c(x,t)$, $G(\eta)=q(x,t)$, where $\eta =x-vt$, where $v$ represents the (constant) speed of the front. Furthermore, we observe that both $F(\eta)$ and $G(\eta)$ are monotonic functions. Therefore, given a value $c^*\in(0,1)$ and $q^*\in(0,q_e)$, it is possible to invert the functions to obtain $\eta = F^{-1}(c^*)$ and $\eta = G^{-1}(q^*)$, which is equivalent to $x-vt =F^{-1}(c^*)$ and $x-vt = G^{-1}(q^*)$. Consequently, the level sets $c(x,t)=c^*$ and $q(x,t)=q^*$ are straight lines whose slope corresponds to the front velocity, as the figures on the third row show.

\begin{figure}
	\includegraphics[width=\textwidth]{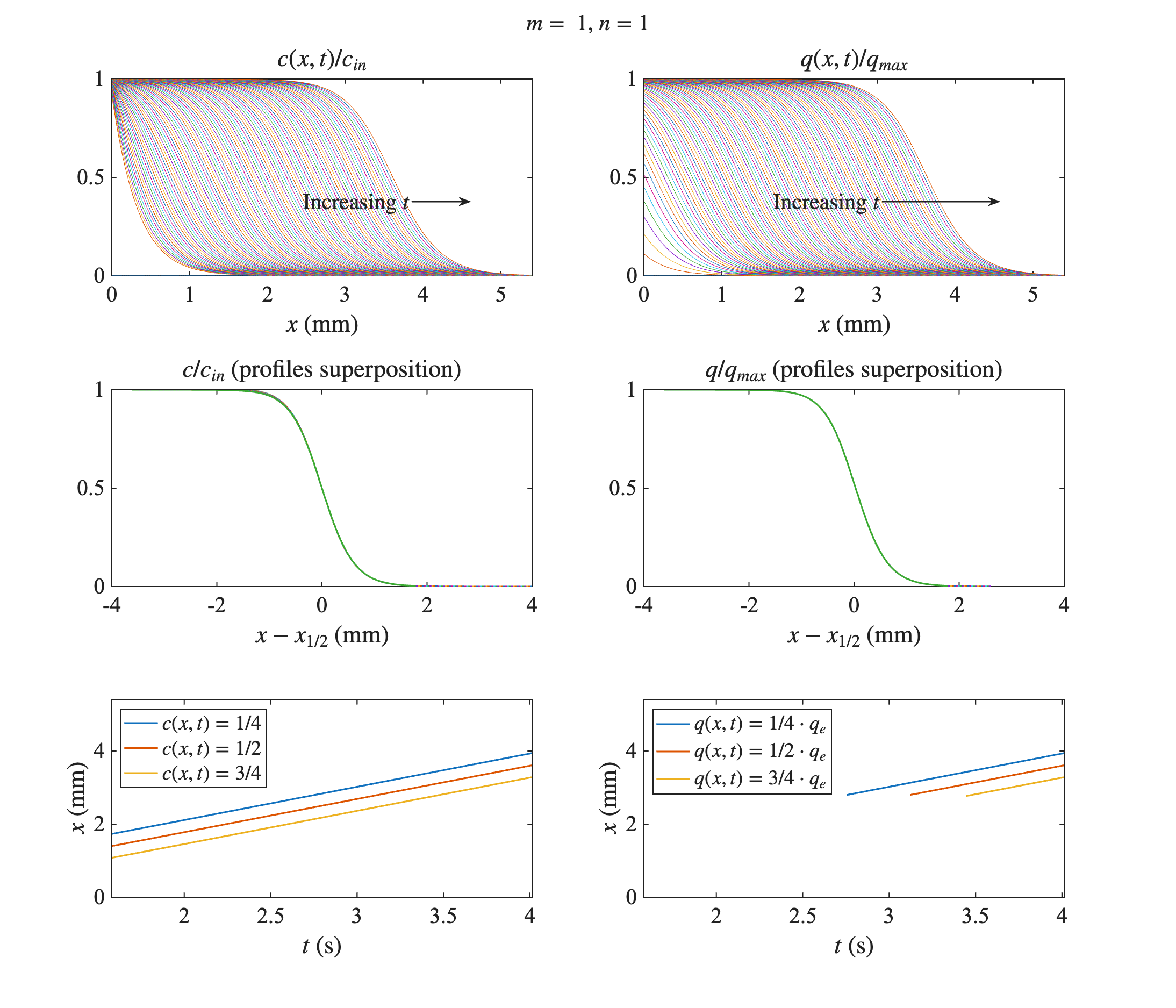}
		\caption{Solution, in dimensional form, $c(x,t)$ and $q(x,t)$ of system \eqref{eq:Sist_ND} using the data provided in \eqref{values} and \eqref{values2} with reaction exponents $(m,n)  = (1,1)$, and $\Pe = 0.1$.  First row: evolution in time of the profiles for $c(x,t)$ (left) and $q(x,t)$ (right) as a function of $x$. Second row: shifted curves to force the intersections at $(x-x_{1/2},c)=(0,1/2)$ (left), and $(x-x_{1/2},q)=(0, q_\text{e}/2)$ (right), where $q_\text{e}$ is given in \eqref{eq:Sips}. Initial transient curves have been excluded from the second row. Third row: level sets for $c(x,t)=c^*$ and $q(x,t)=q^*$, where $c^*=\{1/4, 1/2,3/4\}$, $q^*=\{q_e/4, q_e/2,3q_e/2\}$.}
	\label{Fig:figPDE1}
\end{figure}

It should also be noted that Figure \ref{Fig:figPDE1} shows an example based on values of a real experimental setting. However to explore the presence of travelling waves in a wide range of situations, in Section~\ref{sec:pde}, we present the solution to the partial differential equation \eqref{eq:Sist_ND} for various combinations of the reaction exponents, $(m,n)$ and for different values of the non-dimensional parameters $\Da$, $\Pe$ and $\alpha$. The results demonstrate that the travelling-wave approximation appears to be applicable to a broad spectrum of adsorption reactions.

In \cite{AGUARELES2023}, the authors obtain a first-order system of ordinary differential equations for the profile's functions, $F(\eta)$ and $G(\eta)$. By neglecting the inverse P\'eclet number, which is usually small, the authors in \cite{AGUARELES2023} derive a reduced first-order equation from which explicit solutions for certain values of the global reaction orders, $m$ and $n$, can be obtained. These solutions are extremely useful to experimentalists as they provide insight into the roles and dependencies of the parameters, and they are the main tool to infer unknown experimental parameters. However, these solutions are derived as solutions of an equation that has been obtained by performing two approximations: first, approximating the solutions of the original system of partial differential equations by a system of ordinary differential equations for the travelling-wave profile, and second, by further reducing this system to an even simpler one that allows for analytic solutions in an implicit form. The key contributions of the current work are twofold: it provides conditions for the existence and uniqueness of travelling-wave solutions for the full system of ordinary differential equations, and it explores the range of validity of the analytic solutions by comparing them with the numerical solutions of the full system of ordinary differential equations.

In the following section, we will perform an analytical analysis to establish the conditions that the parameters $(m,n)$ must satisfy to for the travelling-wave approximation to yield a valid solution.

\section{Travelling-wave solutions}
\label{sec:trav}

\label{sec:TW}

As explained in \cite{AGUARELES2023}, to construct a travelling-wave approximation, one must first extend the physical finite domain to an infinite one. In what follows, we shall consider that upstream and far enough from the wave front, the concentration is close to that at the inlet, i.e. $c\sim 1$ while downstream the filter is fresh, so $q \sim 0$.

To obtain the travelling-wave equations, one starts by assuming that $c$ and $q$ can be written as functions that only depend on the similarity variable, $\eta$,
\begin{equation}
    \eta = x - x_f - v(t-t_f)\,,\quad c(x,t)=F(\eta)\,,\quad q(x,t)=G(\eta)\, ,
\end{equation}
where $v$ is the (constant) wave velocity, and $x_f\in\mathbb{R}$, $ t_f>0$ are arbitrary values. Since travelling-wave solutions are defined in unbounded domains, the similarity variable is now defined in the whole real line, $\eta\in\mathbb{R}$. Then, equations (\ref{eq:c_dim},\ref{eq:q_dim}) become
\begin{subequations}
\label{eq:TW}
\begin{alignat}{2}
    (1-v\Da) F'&=\Pe F''+vG', 
    \label{eq:F}\\
 -vG'&=\alpha F^m (1-G)^n-(1-\alpha)G^n, 
 \label{eq:G}
\end{alignat}
where $'=\dv*{}{\eta}$.
Moreover, in the limits $\eta\to\pm\infty$, the solutions are expected to attain constant values
\begin{equation}
\label{eq:TW_BC}
\lim_{\eta\to-\infty} F(\eta)=F_0, \quad \lim_{\eta\to-\infty} G(\eta) = G_0,  \quad\lim_{\eta\to +\infty} F(\eta)=F_\infty, \quad\lim_{\eta\to +\infty} G(\eta)=G_\infty\, ,    
\end{equation}    
\end{subequations}
and 
\begin{equation*}
\lim_{\eta\to\pm\infty} F'(\eta)=0.    
\end{equation*}
Equation \eqref{eq:G} at plus and minus infinity provides the relations
\begin{equation}
\label{eq:F0Finf}
	 \frac{\a}{1-\a}F_0^m=\left(\frac{G_0}{1-G_0}\right)^n\, ,\quad\frac{\a}{1-\a}F_\infty^m=\left(\frac{G_\infty}{1-G_\infty}\right)^n,
\end{equation}
while integrating \eqref{eq:F} and evaluating at plus and minus infinity determines the velocity:
\begin{equation}
\label{eq:vF0Finf}
    v = \frac{F_0 - F_\infty}{G_0 - G_\infty + \Da (F_0 - F_\infty)}\, .
\end{equation}

We note that the second expression in \eqref{eq:F0Finf} states a relation between the downstream concentration and the adsorbed fraction. In particular, we note that, if one assumes that the fluid inside the filter is initially clean, then the travelling-wave assumption only works if the adsorbent is fresh. These are, in fact, the initial conditions that we were considering in \eqref{eq:ic}. So, in what follows,
$$F_\infty=G_\infty=0.$$

Upstream, the fluid concentration corresponds to the inlet value, given as $F_0=1$
in non-dimensional form. At this upstream limit, the filter reaches its adsorption capacity, so $G_0=q_e$, also in non-dimensional form. Consequently, the first expression in \eqref{eq:F0Finf} represents the isotherm in its non-dimensional form, which provides a relation between the maximum adsorption capacity, $q_e$, and the parameter $\alpha$:
\begin{equation}
\label{eq:G_inf}
    \frac{\a}{1-\a}=\left(\frac{q_e}{1-q_e}\right)^n.
\end{equation}
Introducing all these conditions in \eqref{eq:vF0Finf}, the velocity is found to be
\begin{equation}
\label{eq:v}
    v = \frac{1}{q_e+ \Da}\, .
\end{equation}
In Section~\ref{sec:pde} we compare expression \eqref{eq:v} with the velocity obtained from the numerical solutions of the system of partial differential equations \eqref{eq:Sist_ND}.

Notice that equations \eqref{eq:F}, \eqref{eq:G} are decoupled. Integrating equation \eqref{eq:F} and applying the boundary conditions \eqref{eq:TW_BC} with $F_0=1, G_0=q_e$ and $F_\infty=G_\infty=0$, the function $G$ can be written as a function of $F$ and $F'$ as
\begin{equation}
	\label{eq:GF}
	G=\frac{1}{v}\left((1-v\Da)F -\Pe F'\right)=q_e F-\Pe(q_e+\Da)F'.
\end{equation}
Then, combining expressions \eqref{eq:GF} and \eqref{eq:F} and using \eqref{eq:v}, one obtains a second-order ordinary differential equation for the contaminant front concentration, $F(\eta)$
\begin{subequations}
\label{eq:fullODE}
	\begin{equation}
	\label{eq:ODE2n}
	\begin{split}
		\Pe F''=\frac{q_e}{q_e+\Da}  F'&+\alpha F^m \left(1-q_e F +\Pe( q_e+\Da) F'\right)^n\\&-(1-\alpha)\left(q_e F -\Pe( q_e+\Da)F'\right)^n.
	\end{split}
\end{equation}
We note that equation \eqref{eq:ODE2n} has two equilibria: $F=1$ and $F=0$. In what follows, we will show that there exist heteroclinic solutions of equation \eqref{eq:ODE2n} connecting these two equilibria. From a physical perspective, this heteroclinic connection represents the transition between a fully saturated upstream state and a fresh downstream state, corresponding to a sharp adsorption front propagating along the column. In particular, these solutions satisfy 
\begin{equation}
\label{eq:BC_F}
	\lim_{\eta\to-\infty} F(\eta) = 1,\quad \lim_{\eta\to\infty} F(\eta) = \lim_{\eta\to\pm\infty}F'(\eta) =0.
\end{equation} 
\end{subequations}
The P\'eclet number is usually very large and so $\Pe\ll 1$ (see for instance \cite{SULAYMON2014325} or \cite{Valverde24} and the references therein). Therefore,  we first show that in the limit $\Pe=0$ there exist solutions of equation \eqref{eq:ODE2n} satisfying the conditions provided in \eqref{eq:BC_F} and, afterwards, we will provide the proof that these heteroclinic connections persist for positive values of $\Pe$. 
\subsection{Leading-order approximation}
By neglecting $\Pe$ in \eqref{eq:ODE2n} we obtain the first-order ordinary differential equation
\begin{subequations}
\label{eq:leadProb}
\begin{equation}
	\label{eq:lead}
	\frac{q_e^{1-n}}{q_e+\Da}  F'=(1-\alpha)F^n-\alpha F^m \left(\frac{1}{q_e}-F\right)^n, 
\end{equation}
where $\a$ and $q_e$ satisfy \eqref{eq:G_inf}, and we are looking for solutions satisfying
\begin{equation}
\label{eq:lead_BC_ad}
    \lim_{\eta\to-\infty} F(\eta) = 1\, ,\quad \lim_{\eta\to +\infty} F(\eta) = \lim_{\eta\to\pm\infty}F'(\eta) = 0.
\end{equation}	
\end{subequations}
The following proposition provides the conditions that the parameters and initial condition must satisfy for \eqref{eq:lead} to have a solution satisfying \eqref{eq:lead_BC_ad}.

\begin{prop}\label{prop1}
Consider fixed values $n,m\in\mathbb{N}$ and $\a,q_e\in(0,1)$ satisfying \eqref{eq:G_inf}, 
and let $F(\eta)$ be the solution of the initial value problem given by \eqref{eq:lead}
with the initial condition $F(0) = c_0\in(0,1)$.
Then $F(\eta)$ is strictly decreasing and satisfies \eqref{eq:lead_BC_ad} if and only if $m\leq n$.
\end{prop}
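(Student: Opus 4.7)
The plan is to rewrite \eqref{eq:lead} as the scalar autonomous equation $F' = \mu\, \Phi(F)$ with $\mu := q_e^{n-1}(q_e+\Da) > 0$ and
$$\Phi(F) := (1-\alpha)\, F^n - \alpha\, F^m \left(\frac{1}{q_e} - F\right)^n ,$$
and then to reduce the entire proposition to a sign analysis of $\Phi$ on $(0,1)$. A direct substitution combined with the isotherm identity \eqref{eq:G_inf}, which reads $\alpha/(1-\alpha)=(q_e/(1-q_e))^n$, shows that $\Phi(0)=\Phi(1)=0$, so both $F=0$ and $F=1$ are equilibria of the first-order ODE. Since $\mu>0$, the required property ``strictly decreasing with limit $1$ at $-\infty$ and limit $0$ at $+\infty$'' is, by the standard phase-line picture for a scalar autonomous ODE, equivalent to $\Phi(F)<0$ for every $F\in(0,1)$.

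The key analytic step, which I expect to be the main obstacle, is to prove that $\Phi(F)<0$ on $(0,1)$ if and only if $m\leq n$. Using \eqref{eq:G_inf} to eliminate $\alpha$ and taking an $n$-th root, the inequality $\Phi(F)<0$ is equivalent to $g(F)<0$, where
$$g(F) := (1-q_e)\,F^{1-m/n} - (1 - q_e F).$$
A short computation gives $g(1)=0$ and $g'(1)=1-(m/n)(1-q_e)>0$ whenever $m\leq n$. If $m=n$ the exponent vanishes and $g(F)=q_e(F-1)<0$ on $(0,1)$. If $m<n$, then $g''$ is a negative multiple of $F^{-m/n-1}$, so $g$ is strictly concave on $(0,1)$; combined with $g(1)=0$ and $g'(1)>0$ the tangent-line bound forces $g(F) < g'(1)(F-1) < 0$ throughout $(0,1)$. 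For the converse, suppose $m > n$; then $F^m = o(F^n)$ as $F\to 0^+$, so $\Phi(F) \sim (1-\alpha)\,F^n > 0$ on some interval $(0,\delta)$, which is incompatible with $\Phi<0$ on $(0,1)$. Here the isotherm constraint \eqref{eq:G_inf} plays an essential though slightly hidden role in collapsing the two-term power inequality into the tractable single-variable function $g$.

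With the sign of $\Phi$ in hand, the rest is standard phase-line book-keeping. When $m\leq n$, any solution with $F(0)=c_0\in(0,1)$ satisfies $F'<0$ as long as $F\in(0,1)$; by uniqueness it cannot cross either equilibrium, so it stays confined to $(0,1)$ and is strictly decreasing, and boundedness of $\Phi$ on $[0,1]$ precludes finite-time blow-up, yielding a solution defined on all of $\mathbb{R}$. Monotone convergence then forces $F(\eta)\to \ell_\pm$ as $\eta\to\mp\infty$ with $\Phi(\ell_\pm)=0$; since $0$ and $1$ are the only zeros of $\Phi$ in $[0,1]$, we obtain $\ell_-=1$ and $\ell_+=0$, and continuity of $\Phi$ at the equilibria yields $F'(\eta)\to 0$. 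The only-if direction is the contrapositive remark already made: if a decreasing heteroclinic from $1$ to $0$ existed, then $F$ would traverse every value in $(0,1)$, forcing $\Phi<0$ throughout $(0,1)$, which contradicts the positivity of $\Phi$ near the origin when $m>n$.
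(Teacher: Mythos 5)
Your proof is correct and follows essentially the same route as the paper: both reduce the problem to the sign of the same right-hand side on $(0,1)$, use the identity \eqref{eq:G_inf} and an $n$-th root to collapse it to the single inequality $(1-q_e)F^{(n-m)/n} < 1-q_eF$, identify $F=0$ and $F=1$ as the only equilibria, and finish with standard phase-line arguments, including the same positivity-near-the-origin observation for $m>n$. The only cosmetic difference is that you establish the key inequality via strict concavity and the tangent line at $F=1$, whereas the paper compares the strictly increasing function $F^{(n-m)/n}$ with the strictly decreasing function $(1-q_eF)/(1-q_e)$, which coincide at $F=1$.
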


\begin{proof}
First, we will see that $F=0$ and $F=1$ are equilibrium points of the dynamical system given by \eqref{eq:lead}. 
Next, for $m\leq n$, we will see that there are no other equilibrium points for $F\in(0,1)$, and any solution with initial condition $c_0\in(0,1)$ is a decreasing function. Therefore, in this case, any solution is, in fact, a heteroclinic connection between the two points and so the conditions \eqref{eq:lead_BC_ad} are satisfied. 
For $m>n$, we will see that \eqref{eq:lead_BC_ad} cannot be satisfied.

The equilibrium points of the system are given by the zeros of the polynomial
\begin{eqnarray}
    p(u) &=& (1-\alpha)u^n-\alpha u^m \left(a-u\right)^n= \a \left((a-1)^n u^n - u^m(a-u)^n\right),\nonumber
\\
    &=& \a (a-1)^n u^m\left( u^{n-m}-\left(\frac{a-u}{a-1}\right)^n\right)
    \label{eq:p}
\end{eqnarray}
where $a=1/q_e$ and we have used that \eqref{eq:G_inf} is equivalent to
$1-\a=\a(a-1)^n.$  
Clearly, $u=0$ and $u=1$ are zeros of $p(u)$. 

We now prove that, if $m\leq n$, $p(u)$ does not have any other zero for $u\in(0,1)$. By \eqref{eq:p}, 
the non-zero roots of $p(u)$ are also solutions of 
\begin{equation*}
	g(u):=u^{(n-m)/n}=\dfrac{a-u}{a-1}:=h(u). 
\end{equation*}
Clearly, $u=1$ is a solution, and it is the unique solution when $n=m$.  If $m<n$, for $u\in[0,1]$, $g(u)$ is a strictly increasing function, whereas $h(u)$ is strictly decreasing and $g(1)=h(1)=1$. Therefore, $g(u)$ and $h(u)$ cannot attain the same value at any other point in the interval $u\in(0,1)$. Also, if $m=n$, $g(u):=1$ and $h(u)>1$ for all $u\in[0,1)$, and again $u=1$ is the only solution in the interval $u\in[0,1]$.

Moreover, for $m \leq  n$, $g(u)<h(u)$ in $u\in(0,1)$, so $p(u)<0$ for $u\in(0,1)$.  Therefore, any solution $F$ of the initial value problem defined by \eqref{eq:lead} with initial condition $c_0\in(0,1)$ is a decreasing solution that connects $F=1$ with $F=0$ and satisfies the boundary conditions \eqref{eq:lead_BC_ad}.

Finally, if $m>n$, there always exists $0<\e<1$ small enough such that $p(\e)>0$. Therefore, the boundary conditions~\eqref{eq:lead_BC_ad} cannot be satisfied. 
\end{proof}

This proposition proves that there is one heteroclinic connection between the two equilibrium points $F=0$ and $F=1$ when $m\leq n$.
 We note that it is not difficult to see that in the case $m>n$, there exists one equilibrium point $F=c^*\in(0,1)$ provided $a< m/(m-n)$ so, in this case, any solution of~\eqref{eq:lead} with initial condition $c_0\in (c^*,1)$ will be a heteroclinic connection between $c^*$ and $1$, so that $\lim_{\eta\to +\infty} F(\eta)=c^*\neq 0$, which contradicts the initial assumption of $F_{\infty}=0$. Furthermore, when $a\geq  m/(m-n)$, any solution in the interval $(0,1)$ is an increasing function, which has no physical sense in this context. 

Proposition \ref{prop1} provides the scenarios where travelling-wave solutions can be used to describe the evolution of the contaminant in the column. In fact, using \eqref{eq:GF}, we observe that, to leading order (that is, neglecting $\Pe$),
$$G(\eta) = q_e F(\eta),$$
and so $G(\eta)$ has the same profile than $F(\eta)$. Moreover, this proposition offers insight into why travelling waves do not exist when $m>n$. In these scenarios, the contaminant concentration in the fluid does not evolve as a well-defined front. Instead, the fluid flows through the porous media, leaving many adsorption sites unoccupied leading the process to a premature breakthrough. This means that, by the time the pollutant begins to escape from the filter, the adsorbent remains largely unsaturated. In principle, this appears to be a bad scenario, as the filter's lifespan is not at its maximum. However, many other considerations must be taken into account, like manufacturing and regeneration costs.

\subsection{Travelling-wave solutions for the full system}
\label{sec:TWComp}
We now consider the full system provided by equation \eqref{eq:ODE2n}, which, for clarity, we rewrite here:
\begin{equation*}
	\begin{split}
		\Pe F''=\frac{q_e}{q_e+\Da}  F'&+\alpha F^m \left(1-q_e F +\Pe( q_e+\Da) F'\right)^n\\&-(1-\alpha)\left(q_e F -\Pe( q_e+\Da)F'\right)^n,
	\end{split}
\end{equation*}
and we aim to investigate the existence of heteroclinic connections   satisfying
\begin{equation*}
	\lim_{\eta\to-\infty} F(\eta) = 1,\quad \lim_{\eta\to\infty} F(\eta) = \lim_{\eta\to\pm\infty}F'(\eta) =0,
\end{equation*} 
when $\Pe>0$. As we have observed in the preceding section, the leading order satisfies these conditions only if $m \leq n$; therefore, we restrict our analysis to these cases.

In what follows, we prove the existence of travelling-wave solutions for the full system \eqref{eq:ODE2n} and for small (positive) values of $\Pe$. We consider fixed values of $m$ and $n$, such that $m\leq n$, for which we already know, due to Proposition \ref{prop1}, that there exists a travelling-wave solution for the zero-order approximation \eqref{eq:leadProb}.

First, naming $\epsilon=\Pe$, we note that \eqref{eq:ODE2n} can be written as
\begin{equation}
    \label{eq:slow-fast1}
    \left\{
    \begin{array}{rcl}
    y' &=& z,\\
    \epsilon z' &=& \dfrac{q_e}{q_e+\Da} \, z - P(y,\epsilon z),
    \end{array}
    \right.
\end{equation}
where $y=F$, $z=F'$, and 
$$
P(y,\epsilon z)= 
(1-\alpha)\left(q_e y -\epsilon(q_e+\Da) z\right)^n-
\alpha y^m \left(1-q_e y +\epsilon(q_e+\Da) z\right)^n.
$$
System \eqref{eq:slow-fast1}
is a slow-fast system, being $'=\text{d}/\text{d}\eta$ and $\eta$ the slow time. 

Slow-fast systems, a subclass of dynamical systems, are characterized by the presence of variables that evolve on distinctly different time scales: ``slow'' variables, which change gradually, and ``fast'' variables, which evolve more rapidly. The interplay between these time scales can lead to rich and often counter-intuitive dynamics, including canards, relaxation oscillations, and bifurcations.  The works of Dumortier provide foundational insights into the qualitative behaviour of these complex systems, in particular his publication in the \emph{Memoirs of the American Mathematical Society}, \cite{Dumortier1996}, has profoundly influenced the study and application of slow-fast dynamics in theoretical and applied contexts. There is an enormous
literature on this topic, see for example \cite{Dumortier2021} and the references therein. 

Considering $\epsilon=0$ in system \eqref{eq:slow-fast1}, we recover the leading-order approximation equation \eqref{eq:lead}. In particular, from \eqref{eq:slow-fast1}, we get the set
\begin{equation}
    \label{eq:slowset}
 z= \dfrac{q_e+\Da}{q_e}P(y,0)=q_e^{n-1}({q_e+\Da})p(y),
\end{equation}
where $p(y)$ is given in \eqref{eq:p} (see Proposition \ref{prop1}),
which is known as the critical ``slow set''.

The dynamics along the slow set \eqref{eq:slowset} are given by 
$$y' = \dfrac{{q_e+\Da}}{q_e}P(y,0).$$
Notice that the critical slow set \eqref{eq:slowset} implicitly gives the solution of \eqref{eq:lead} in the phase space satisfying \eqref{eq:lead_BC_ad}.
As stated in Proposition \ref{prop1}, for $m\leq n$
the critical slow set is a heteroclinic connection between the equilibrium points $(1,0)$ and $(0,0)$. In \cite{AGUARELES2023}, the authors provide solutions in implicit form of \eqref{eq:leadProb} for a set of values of the parameters $m$ and $n$.

Introducing the fast time $\xi=\eta/\epsilon$ in \eqref{eq:slow-fast1}, the system can be rewritten as 
\begin{equation}
    \label{eq:slow-fast2}
    \left\{
    \begin{array}{rcl}
    \dot{y} &=& \epsilon z,\\
    \dot{z} &=& \dfrac{q_e}{q_e+\Da}\,z  - P(y,\epsilon z),
    \end{array}
    \right.
\end{equation}
where $^. = \text{d}/\text{d}\xi$. Clearly, the phase space $(y,z)$ of both systems is the same; only the velocity along the solution curves changes. Moreover, the points $y=0,z=0$ and $y=1,z=0$ are also equilibrium points of the full system for any value of $\epsilon$.

Again, considering the limit system for $\epsilon=0$, we obtain the layer equation
\begin{equation}
    \label{eq:slow-fast3}
    \left\{
    \begin{array}{rcl}
    \dot{y} &=& 0,\\
    \dot{z} &=& \dfrac{q_e}{q_e+\Da}\, z  - P(y,0),
    \end{array}
    \right.
\end{equation}
where the ``layers'' $y=y_0$ are invariant sets of the system. In fact, system \eqref{eq:slow-fast3} is a uniparametric family of one-dimensional subsystems ($y_0$ being the parameter) representing the fast motion. Again, the slow motion takes place along the curve \eqref{eq:slowset}. Moreover, any point on the slow set is a normally hyperbolic point (that is, the linear part of the differential field when $\epsilon=0$ has at least one non-zero eigenvalue). 
In Figure~\ref{fig:slow-fast-layers}, the dynamics of system \eqref{eq:slow-fast3} are shown. Clearly, along the layers where $y=y_0$, on the points over the slow-set, $\dot{z}>0$, whereas in those below, $\dot{z}<0$.

\begin{figure}[!ht]
\centering	
\includegraphics[width=0.6\linewidth]{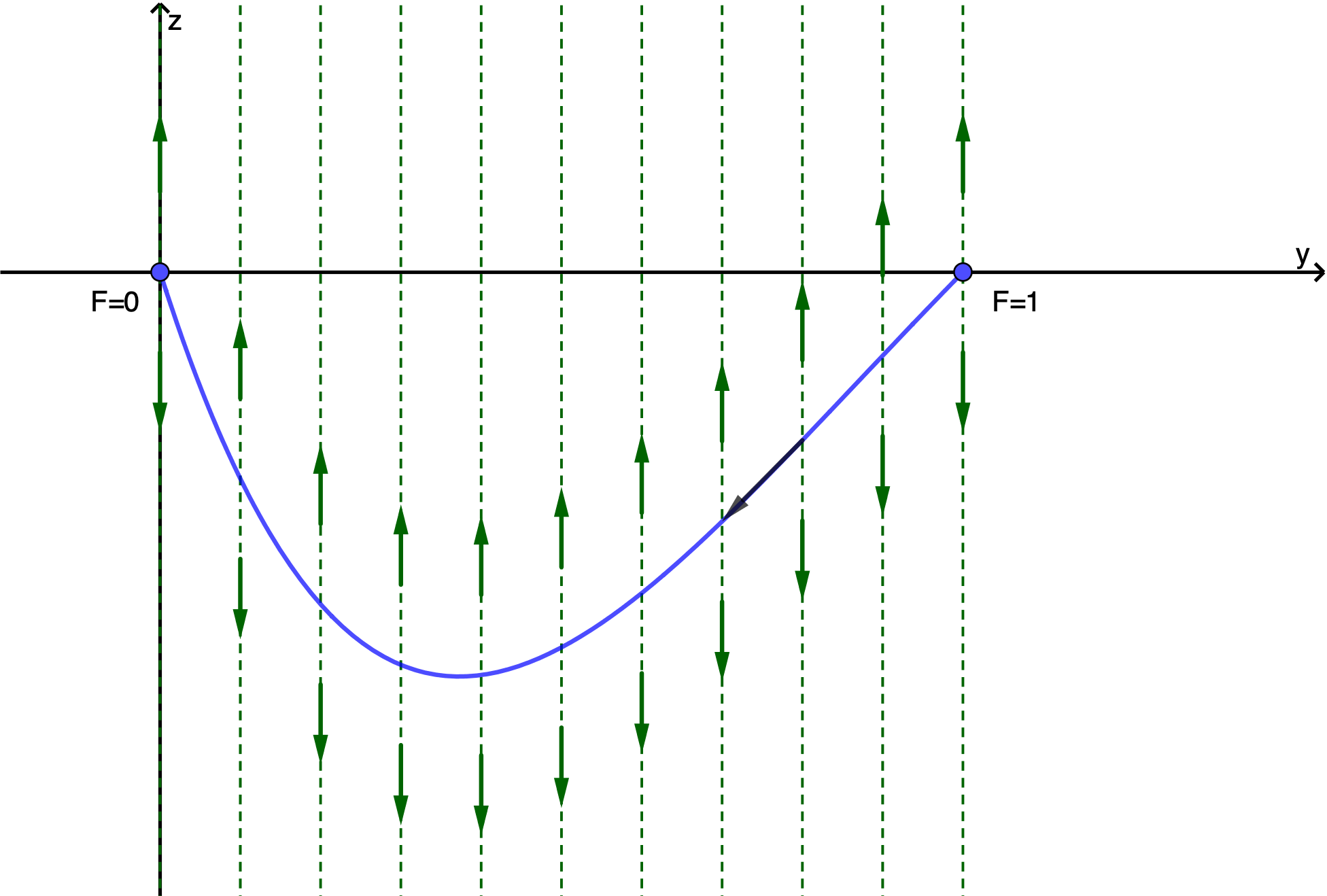}
\caption{Slow-fast dynamics in the configuration space system \eqref{eq:slow-fast2} for $\epsilon=0$. The fast dynamics take place along the ``layers'' $y=y_0$. The blue curve corresponds to the slow-set \eqref{eq:slowset}.}
   \label{fig:slow-fast-layers}
\end{figure}

Viewing the problem as a slow-fast system allows us to ensure the persistence of the connection for $\epsilon\neq 0$, for example, applying Theorem 9.10 in \cite{Dumortier2021}. That is, there exists a heteroclinic connection between the two equilibrium points for the full system \eqref{eq:slow-fast2}, or equivalently, there exists a solution for the boundary value problem \eqref{eq:fullODE}, which corresponds to the travelling wave.  The proof is based on an analytical continuation with respect to the small parameter, so that, provided that $\Pe$ is small enough, the solution that connects the two equilibrium points is close to the solution for the leading-order equation \eqref{eq:lead}.


\section{Sensitivity analysis}
\label{sec:sens}
The main goal of modelling pollutant capture through adsorption is to replicate the operational dynamics of a column during its experimental phase, and to assess the scalability to actual industrial settings. To achieve this, a simplified model is required to derive explicit solutions that facilitate the estimation of experimental parameters, such as $\a$. In the previous sections, we have proved that the heteroclinic connections representing travelling-wave solutions persist for non-zero values of the inverse P\'eclet number. Although the inverse P\'eclet number is usually very small (see, for instance, \cite{myers23} or \cite{CABRERACODONY2018565}), it is necessary to perform an analysis to evaluate the accuracy of the leading-order approximation \eqref{eq:leadProb}, identifying the range of values for $\Pe$ where the deviation between the full and the reduced models remains acceptable. 

We solve the system of equations \eqref{eq:TW} using the MATLAB built-in function \texttt{ode15s}, which is particularly useful to solve problems with a singular mass matrix. We integrate backwards from initial conditions close to $F=0$, identifying heteroclinic connections between $F=0$ and $F=1$.  We then shift the curves so that $F(0)=1/2$ for all of them. The results are presented in Figures~\ref{Fig:solMaria_perNM_EspaiFases_FrontOna_A} and~\ref{Fig:solMaria_perNM_EspaiFases_FrontOna_B}. A qualitative comparison of the curves in Figures~\ref{Fig:solMaria_perNM_EspaiFases_FrontOna_A} and~\ref{Fig:solMaria_perNM_EspaiFases_FrontOna_B} shows the robustness of the leading-order approximation, which even for values of $\Pe$ of order one remains close to the solution of the full problem \eqref{eq:TW}. In this section, we go further and measure the difference between the solutions of \eqref{eq:ODE2n} and the leading-order approximation ($\Pe=0$) for values of the inverse P\'eclet number up to 1.5, well inside the order-one regime.   

\begin{figure}
    \centering
\includegraphics[width=0.45\linewidth]{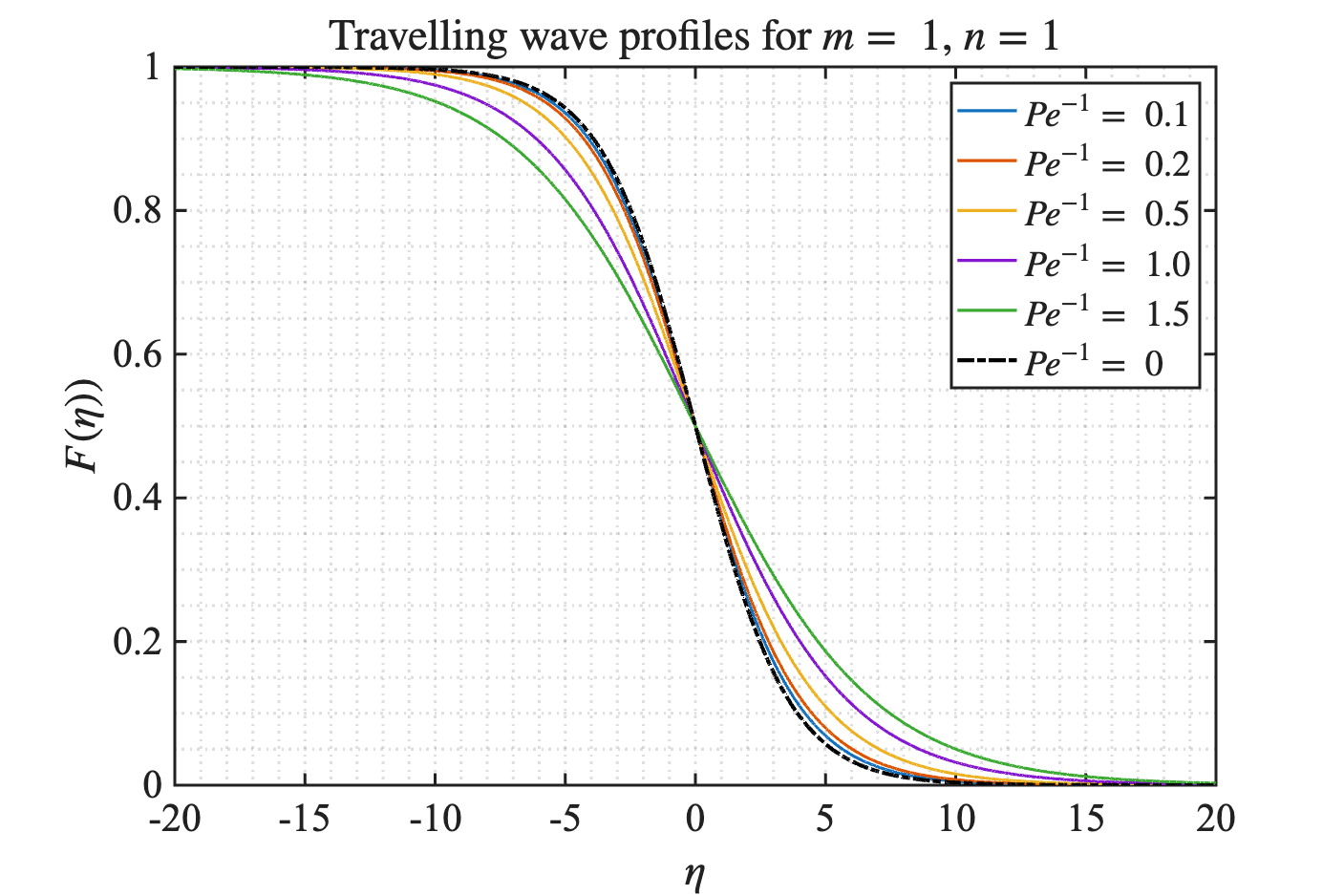}
\includegraphics[width=0.45\linewidth]{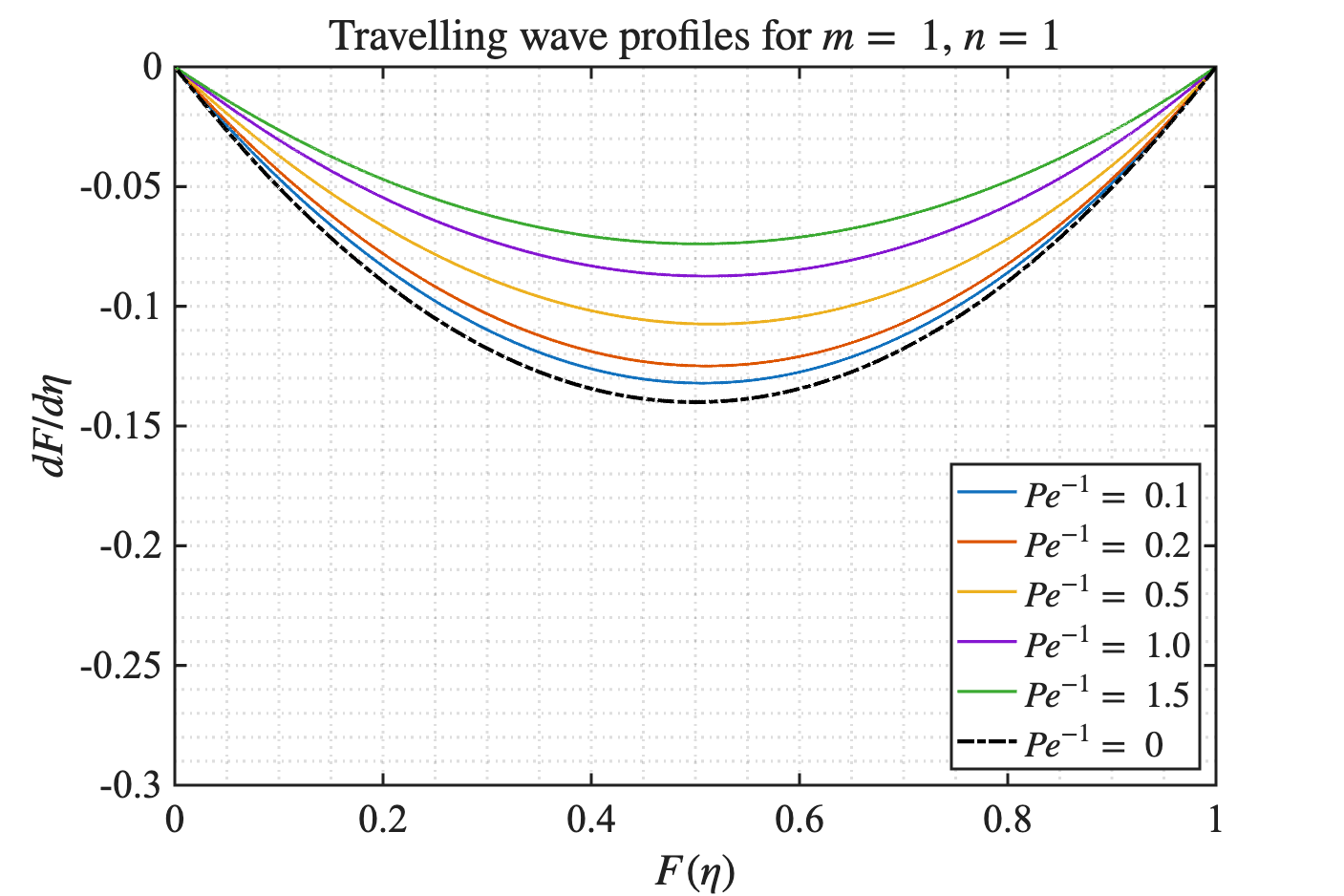}
\includegraphics[width=0.45\linewidth]{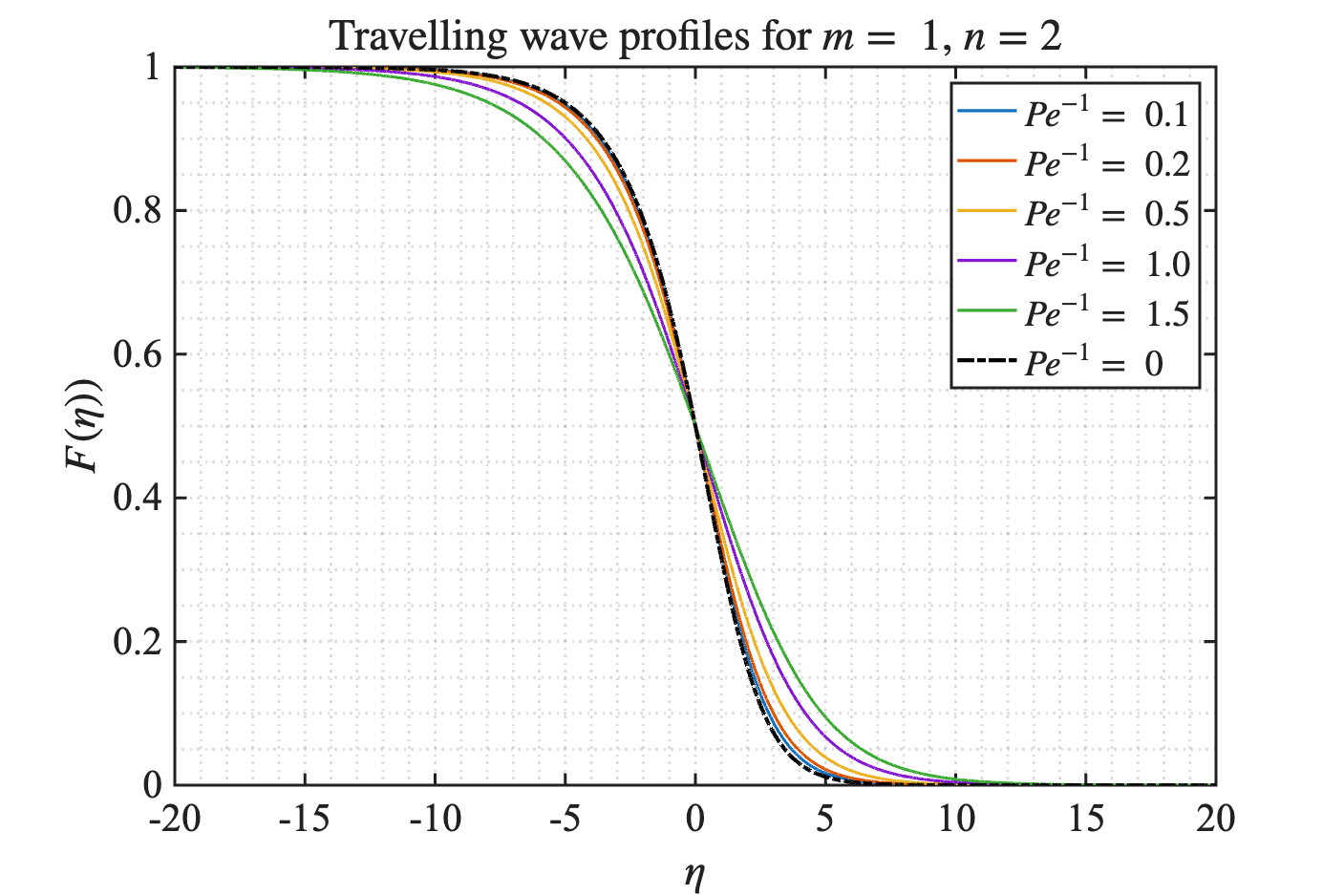}
\includegraphics[width=0.45\linewidth]{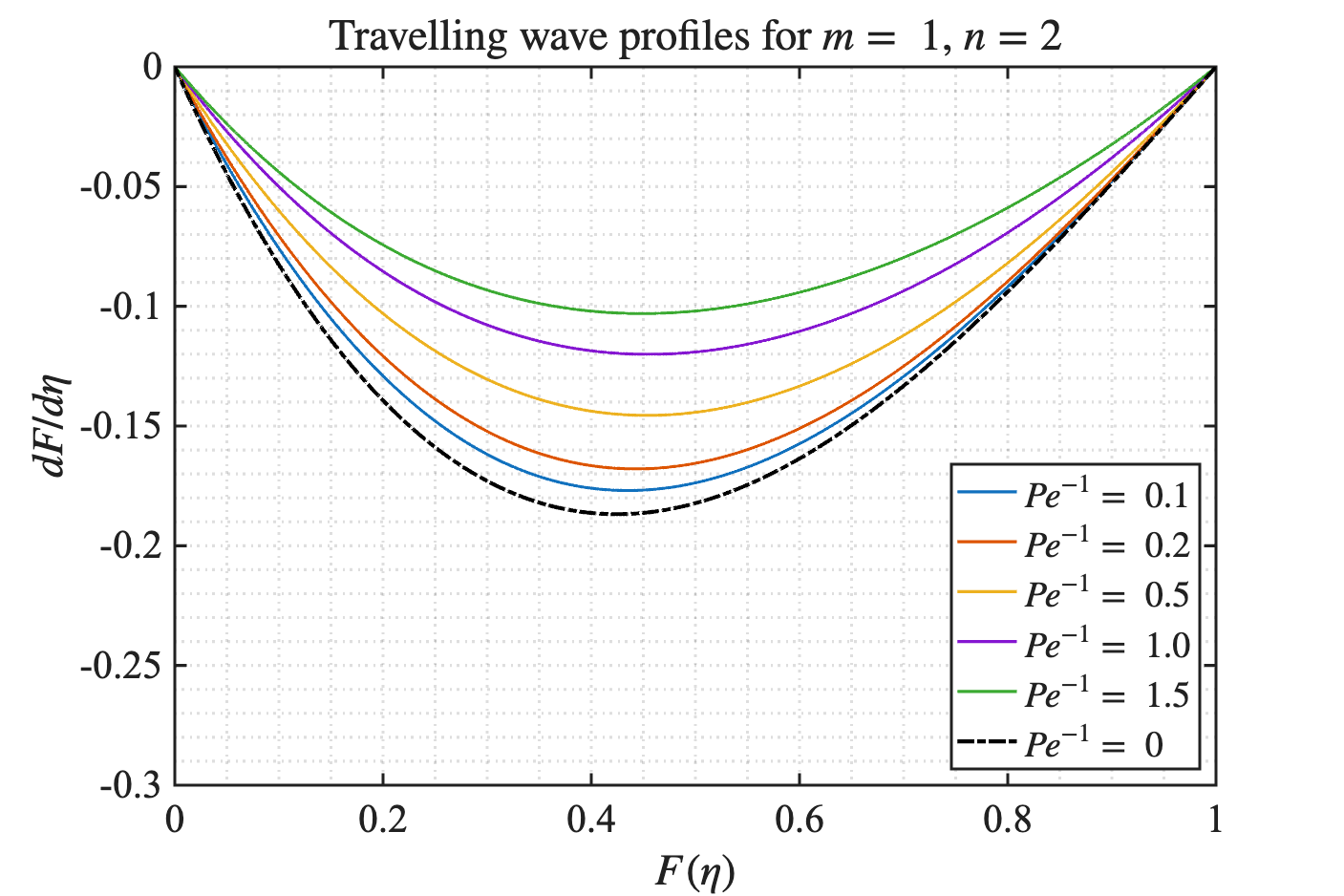}
\includegraphics[width=0.45\linewidth]{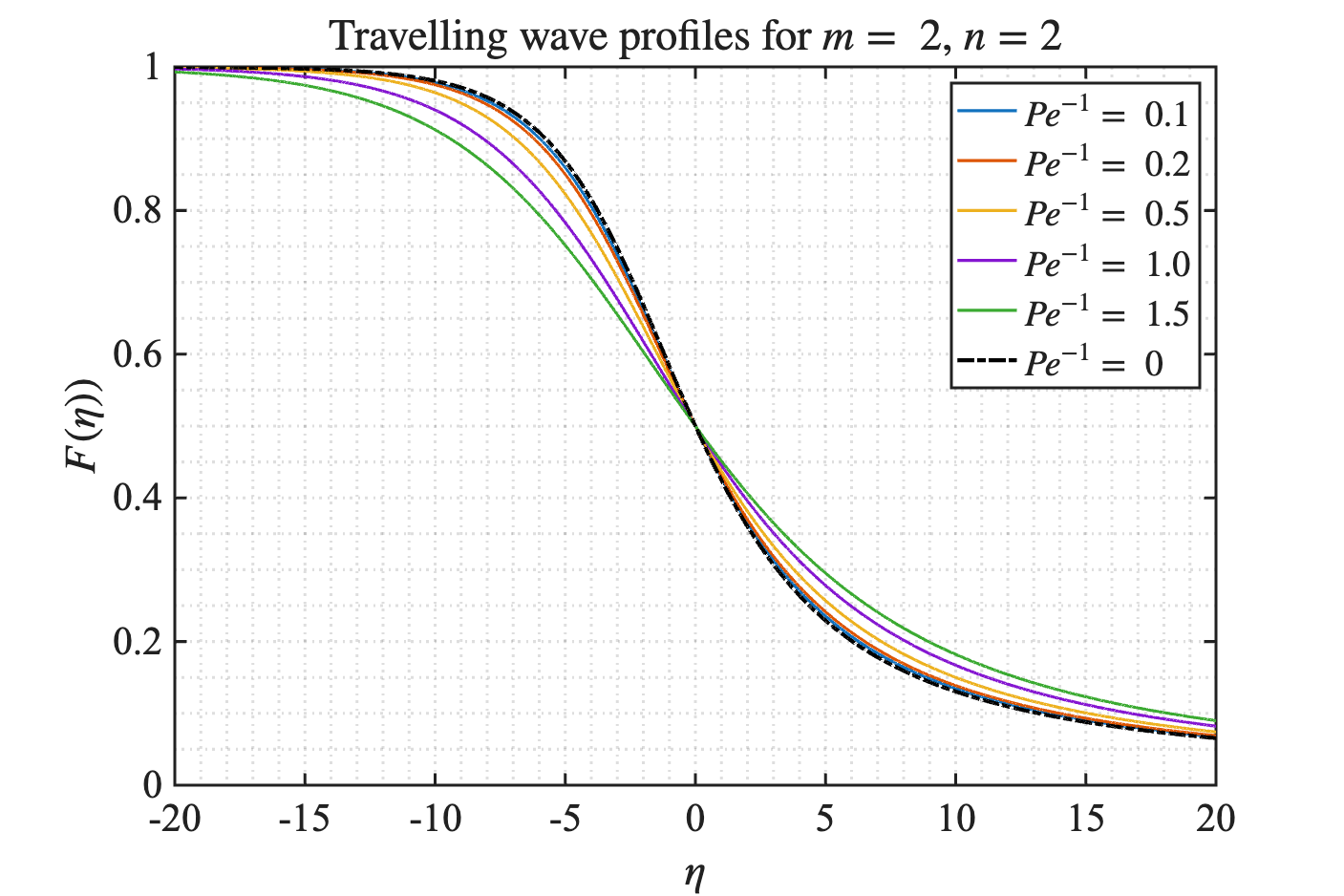}
\includegraphics[width=0.45\linewidth]{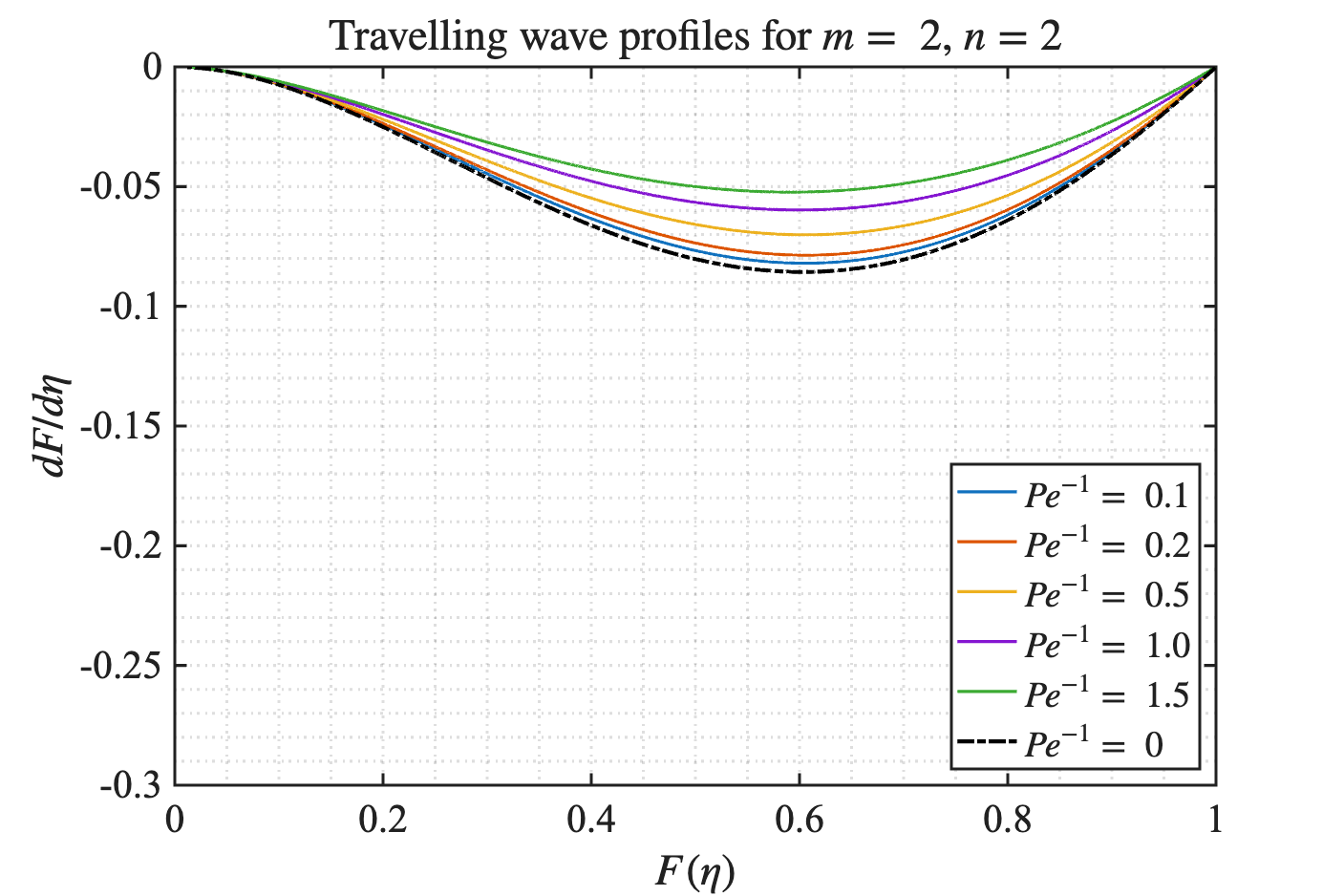}
    \caption{Travelling-wave solutions as a function of $\eta$ (left column), and heteroclinic connections of \eqref{eq:ODE2n} in the phase plane (right column) for $(m,n)=(1,1),(1,2)$ and $(2,2)$ for different values of the inverse P\'eclet number. The simulation values are $q_e=0.7$, $\Da=0.1$.}
 \label{Fig:solMaria_perNM_EspaiFases_FrontOna_A}
\end{figure}

\begin{figure}
\centering
\includegraphics[width=0.45\linewidth]{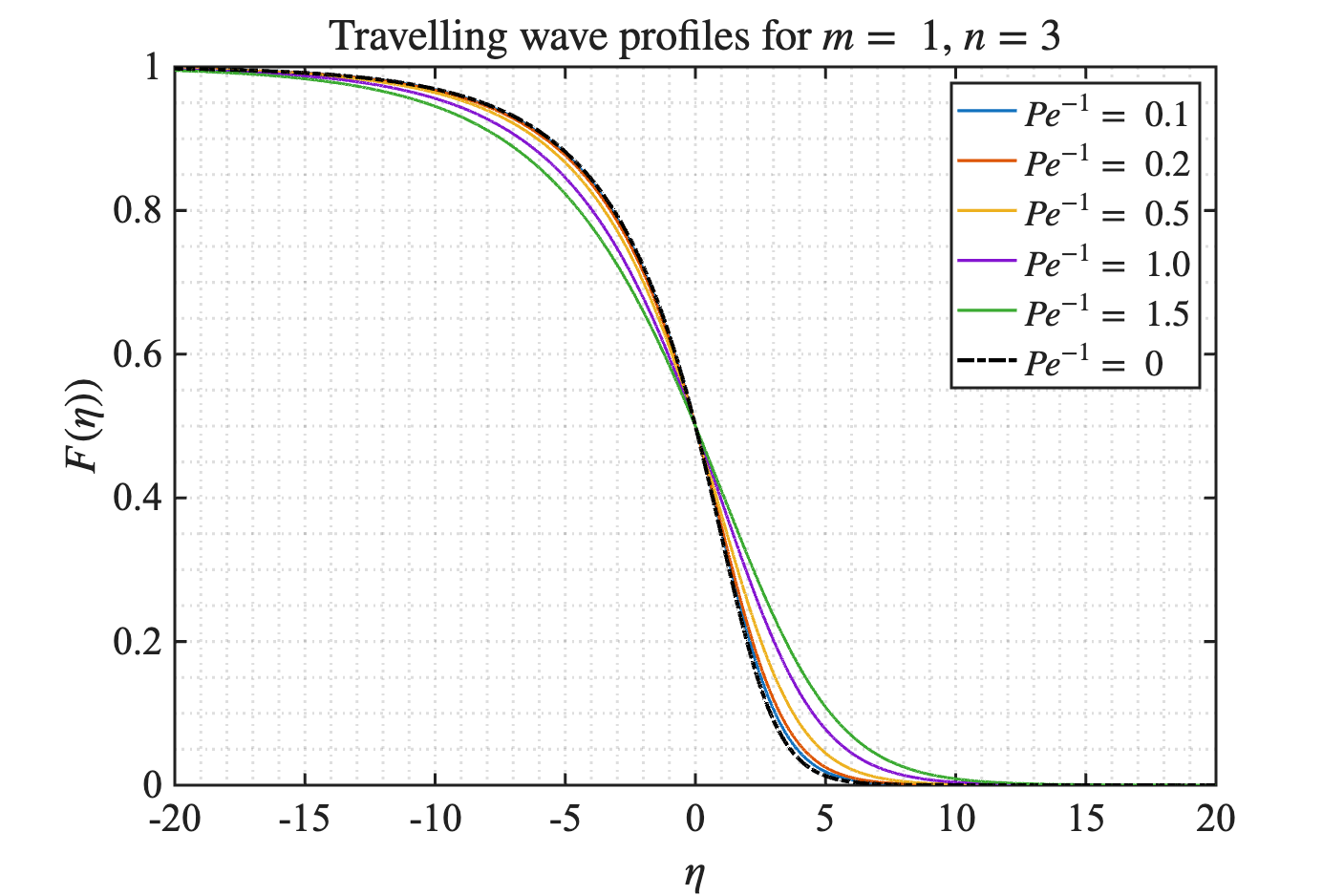}
\includegraphics[width=0.45\linewidth]{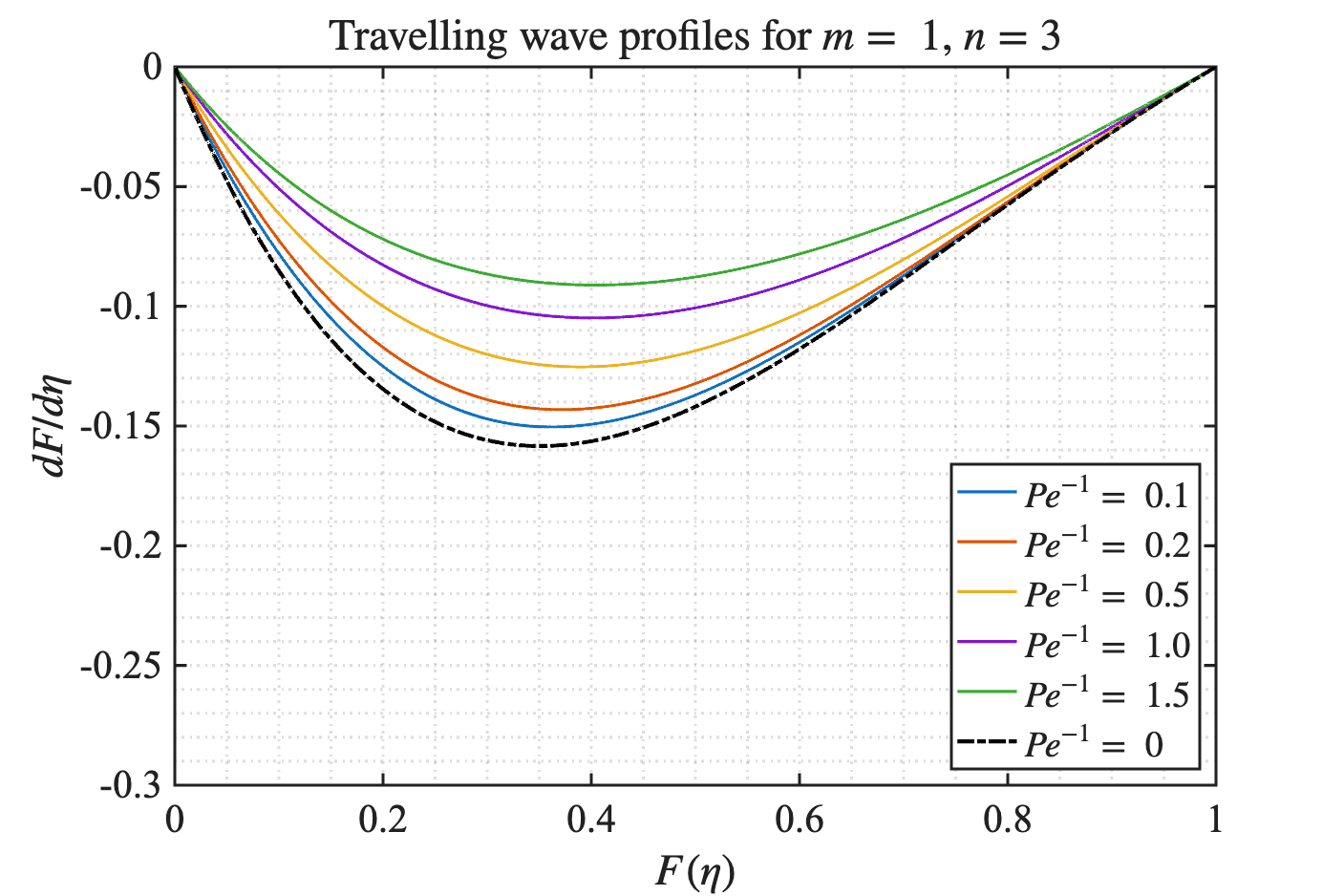}
\includegraphics[width=0.45\linewidth]{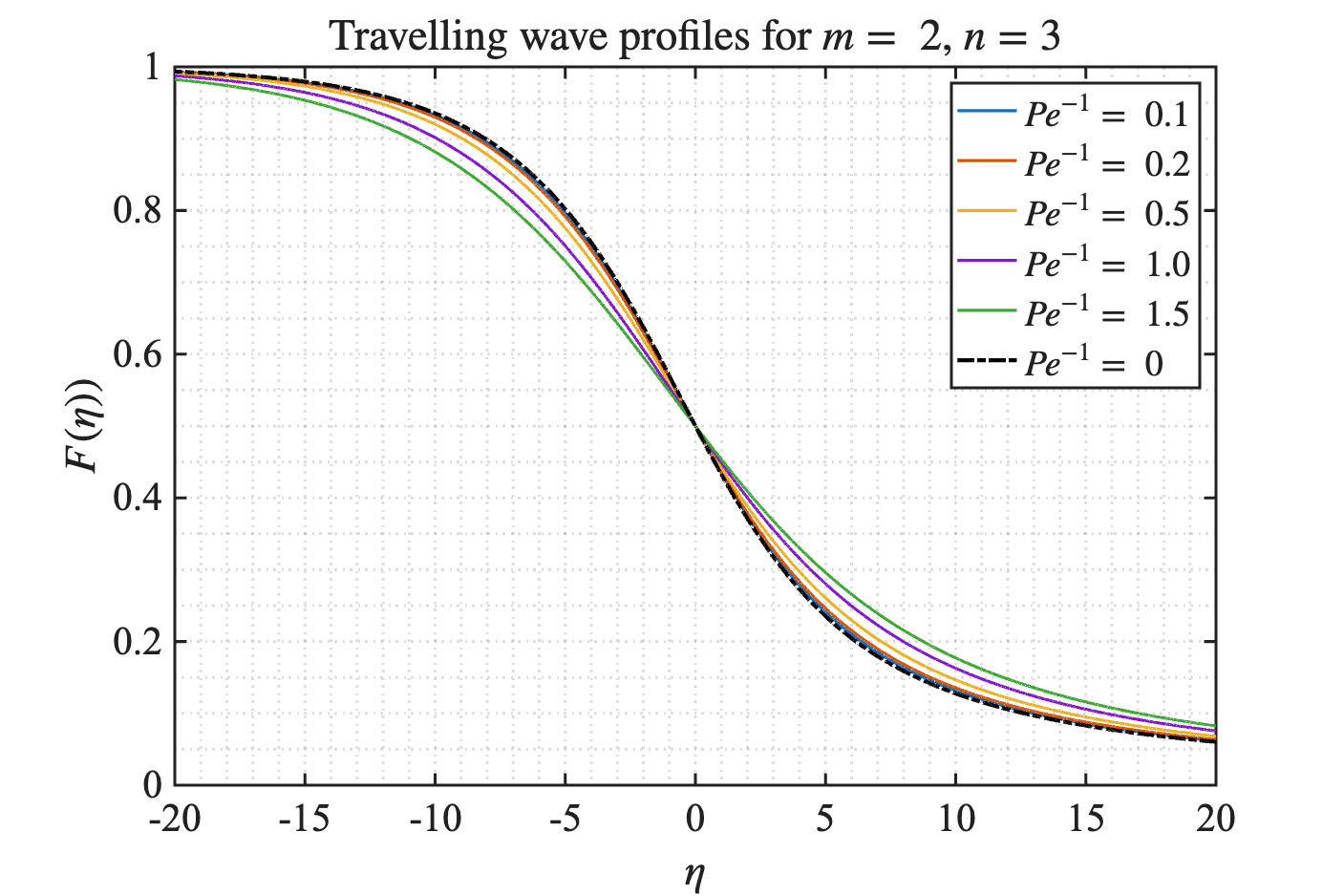}
\includegraphics[width=0.45\linewidth]{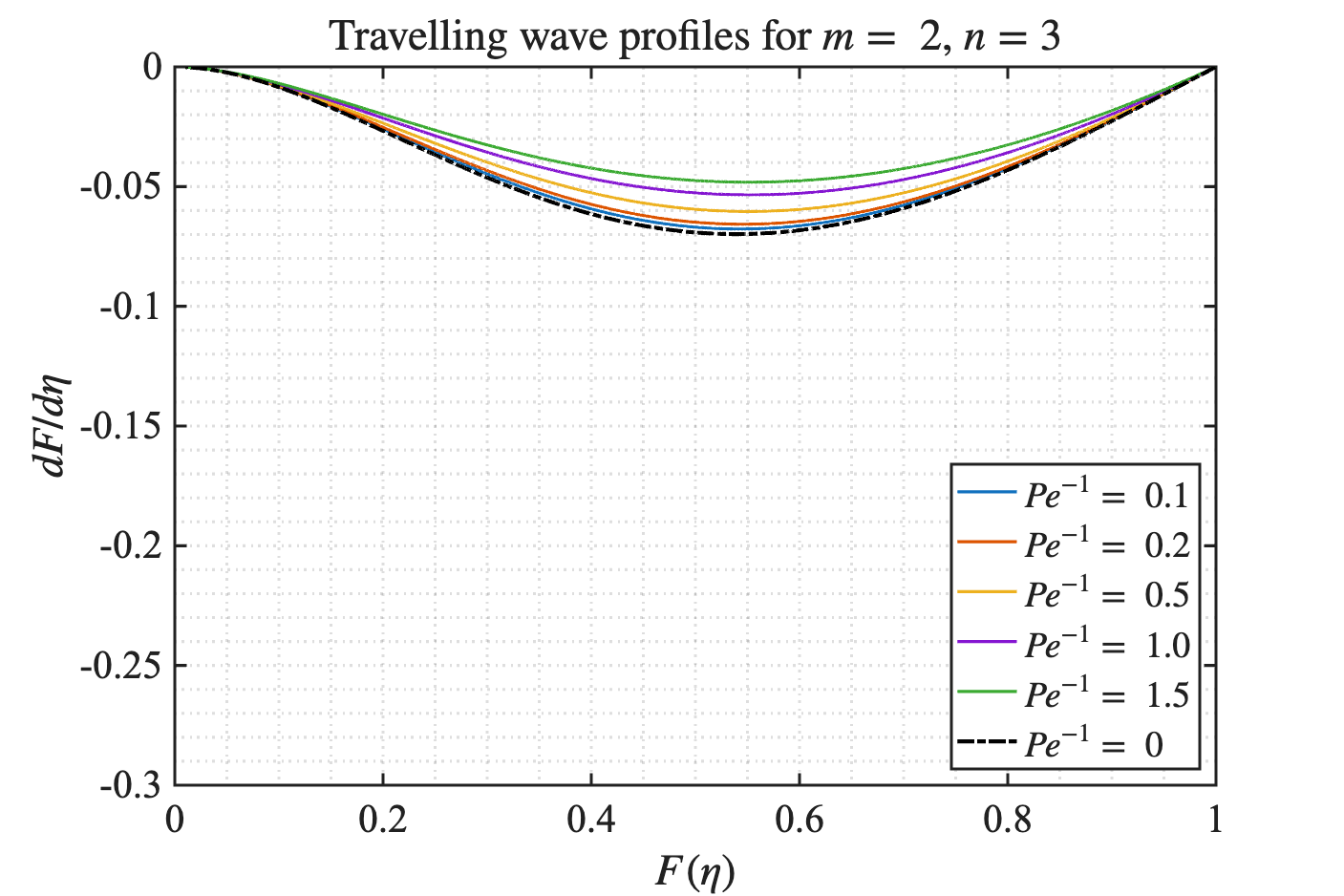}
\includegraphics[width=0.45\linewidth]{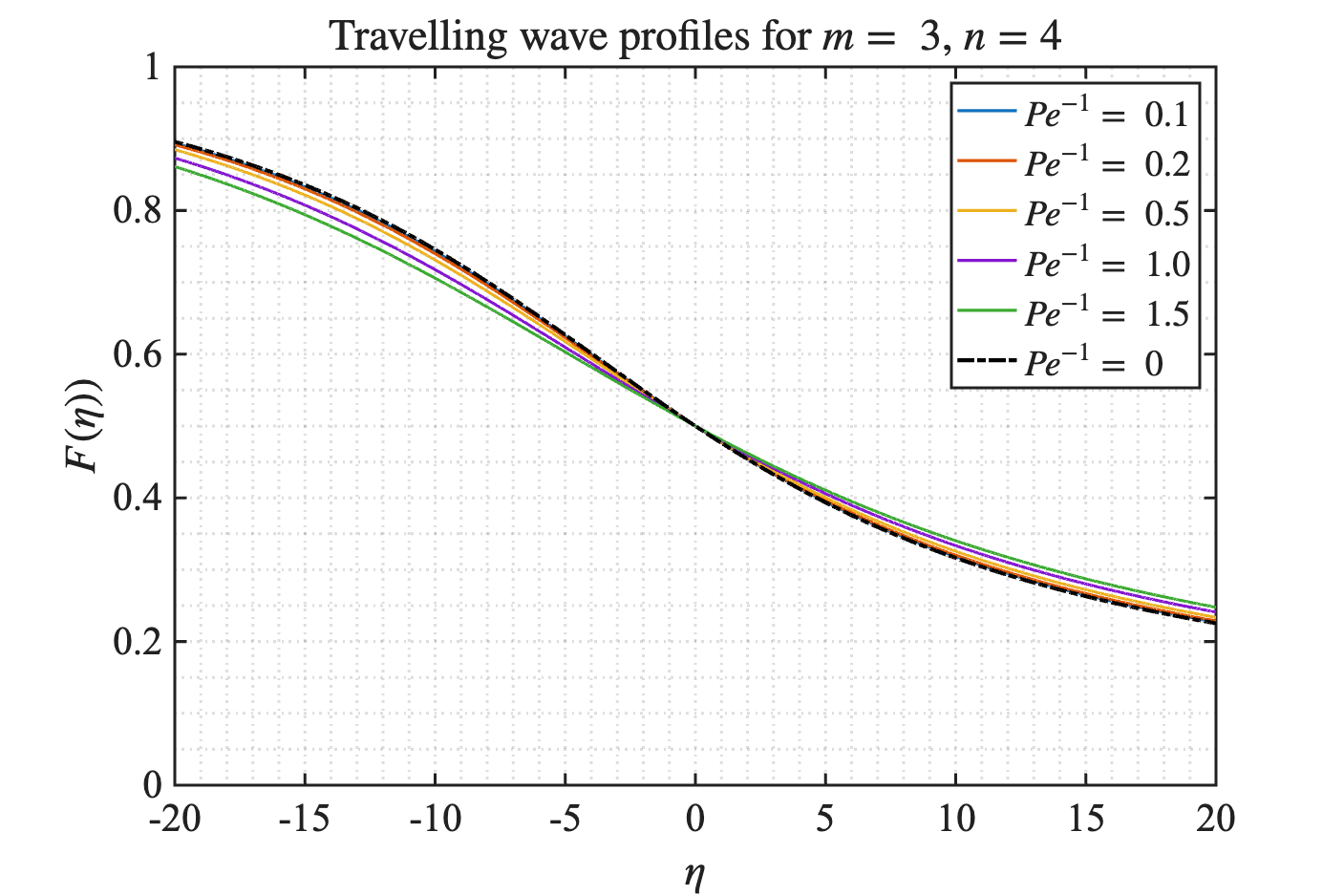}
\includegraphics[width=0.45\linewidth]{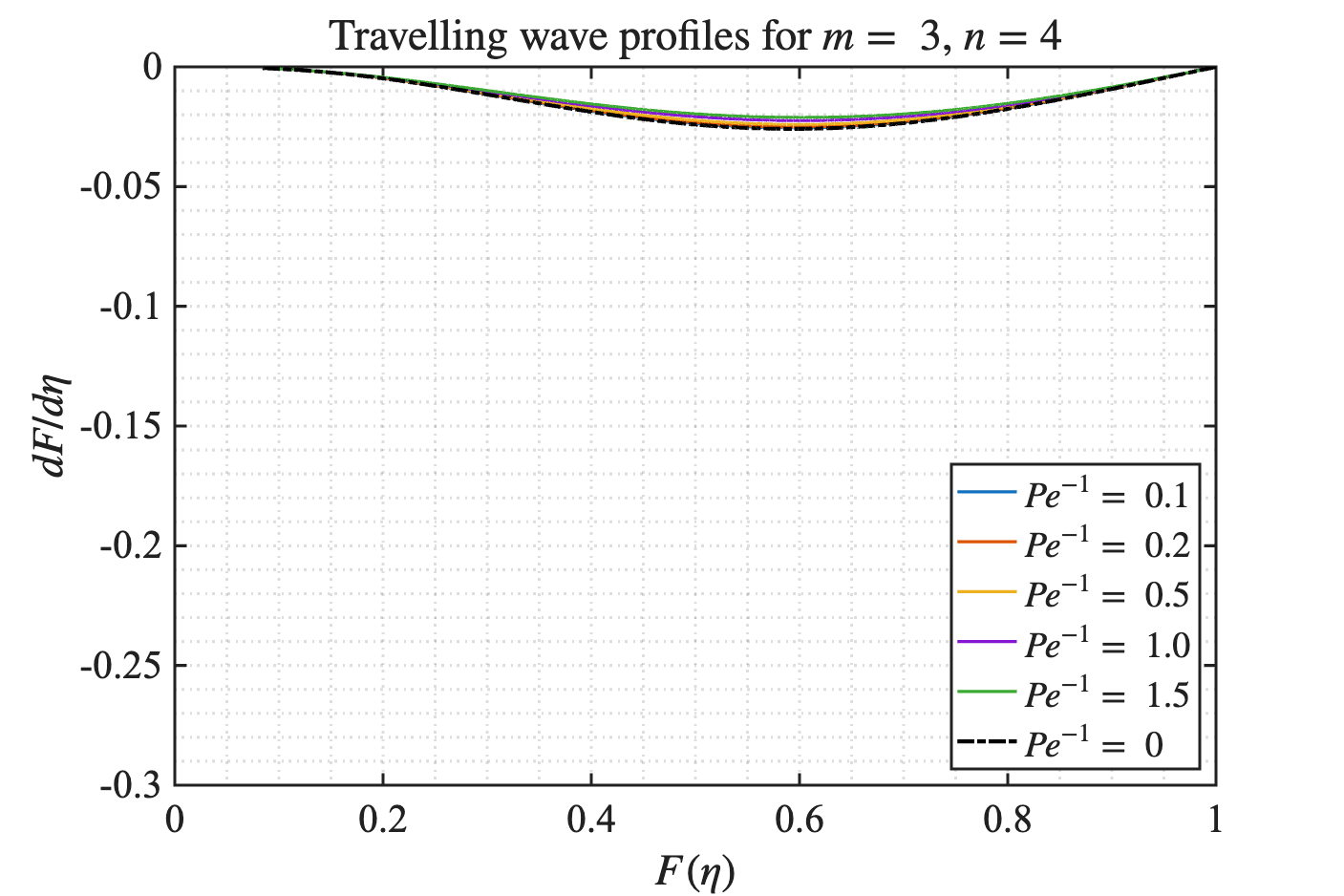}
    \caption{Travelling-wave solutions as a function of $\eta$ (left column), and heteroclinic connections of \eqref{eq:ODE2n} in the phase plane (right column) for $(m,n)=(1,3),(2,3)$ and $(3,4)$ for different values of the inverse P\'eclet number. The simulation values are $q_e=0.7$, $\Da=0.1$.}
\label{Fig:solMaria_perNM_EspaiFases_FrontOna_B}
\end{figure}

In Figure~\ref{Fig:L2norm} we compare the travelling-wave curves through the $L^2$-norm in a bounded domain, $[-\eta^*,\eta^*]$, of the difference between the leading-order solution, $F_0(\eta)$, and the solution of the full system \eqref{eq:fullODE}, $F(\eta;\Pe)$, for values of $\Pe$ ranging from 0 to 1.5,
\begin{equation}
\label{eq:ePe}
e(\Pe) = \left(\int_{-\eta^*}^{\eta^*} (F(\eta;\Pe)-F_0(\eta))^2\,\text{d}\eta\right)^{1/2}\,,	
\end{equation}
where we take $\eta^*=20$ to capture the transition zone from 1 to 0. In Figure~\ref{Fig:L2norm} we show the values of $e(\Pe)$ for different combinations of the order parameters, $(m,n)$, and for different values of the Damk\"ohler number, $\Da$. We use a log-log scale to reveal the linear dependence of the error on $\Pe$. To do this, we perform a linear fit, but only including values of $\Pe\in[0,0.4]$. The slopes of the fit are close to one, as predicted in \cite{AGUARELES2023}. In fact, if one takes larger values of $\Pe$, the linear fit provides slopes that start being much slower than one, indicating that the errors stop being linear in $\Pe$. However, when the Damk\"ohler number becomes of order one and greater, the fit seems to be worse. To achieve the same fits as with lower values of the Damk\"ohler number, the upper bound in the $\Pe$ range should be lower. We also note that the errors are larger in the physisorption case, i.e., for $m=n=1$.

\begin{figure}
\centering
\includegraphics[scale=0.35]{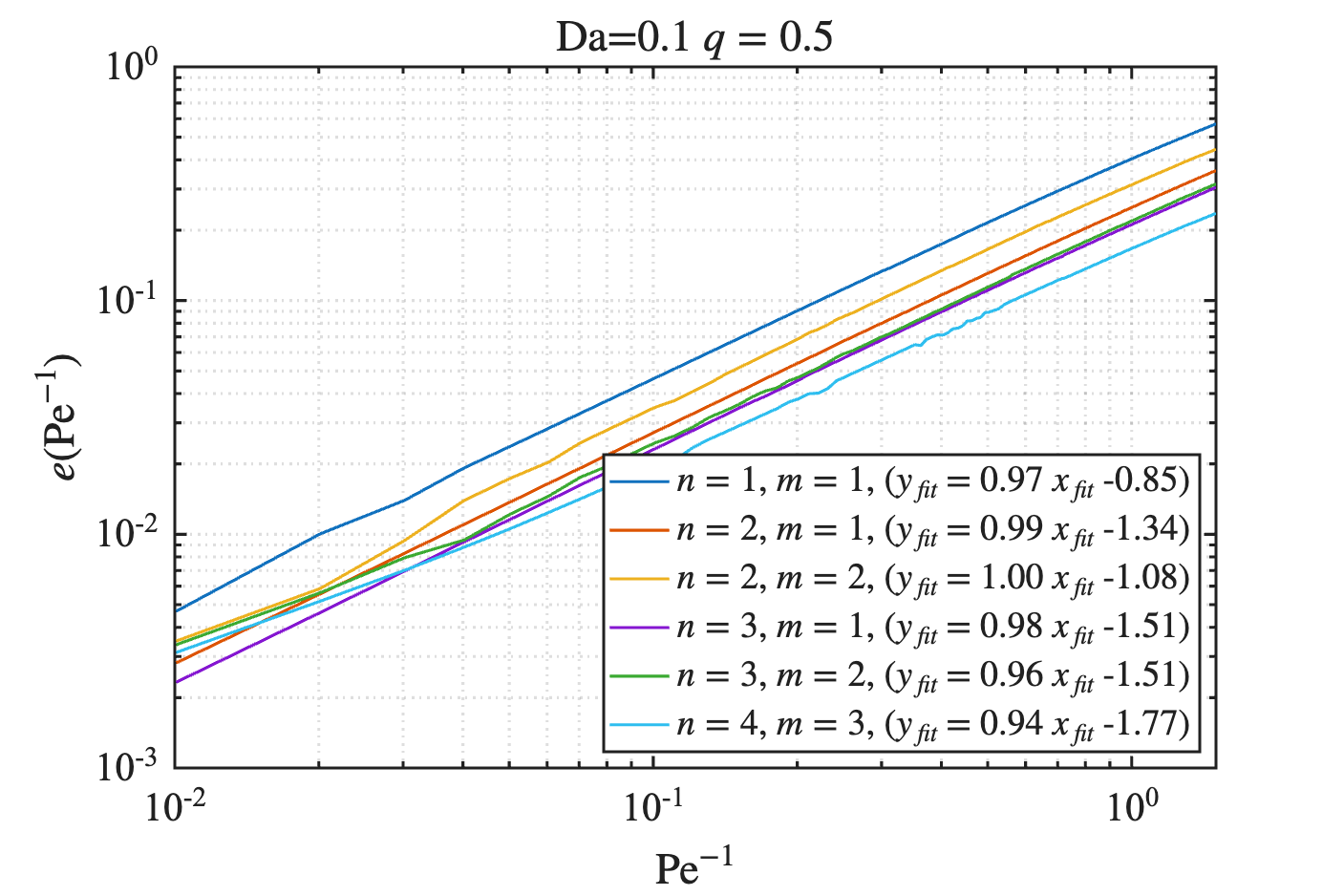}
\includegraphics[scale=0.35]{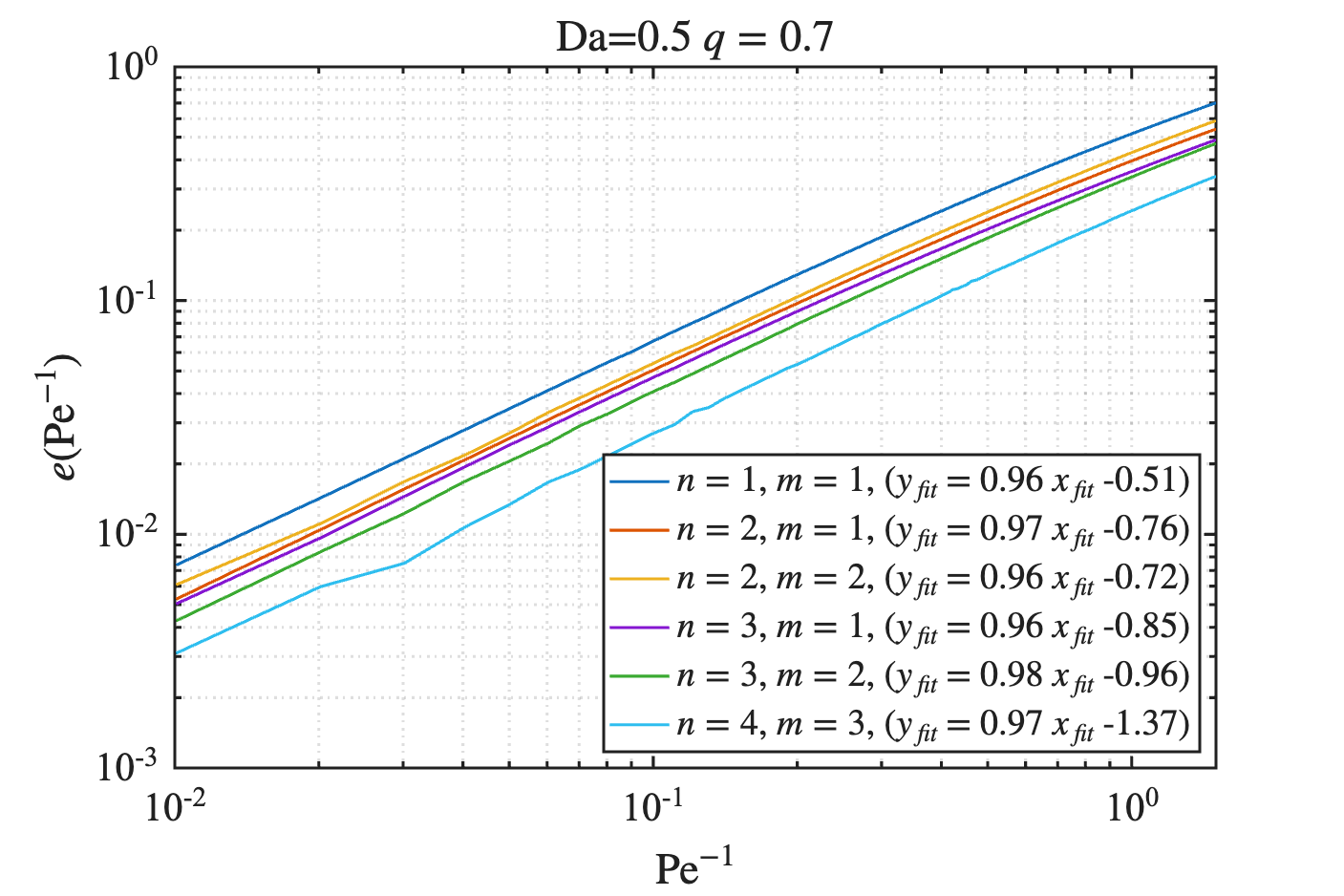}
\includegraphics[scale=0.35]{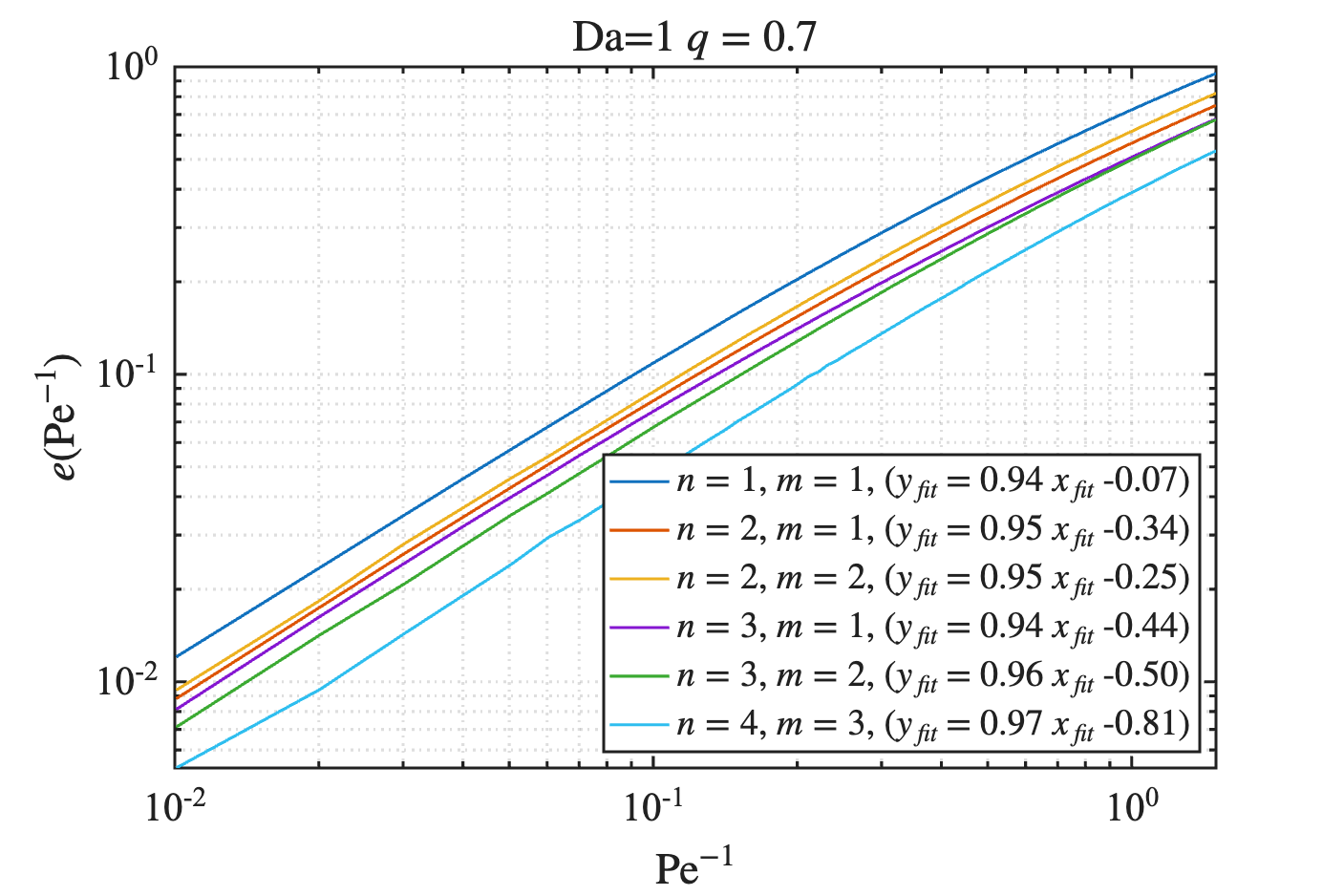}
\includegraphics[scale=0.35]{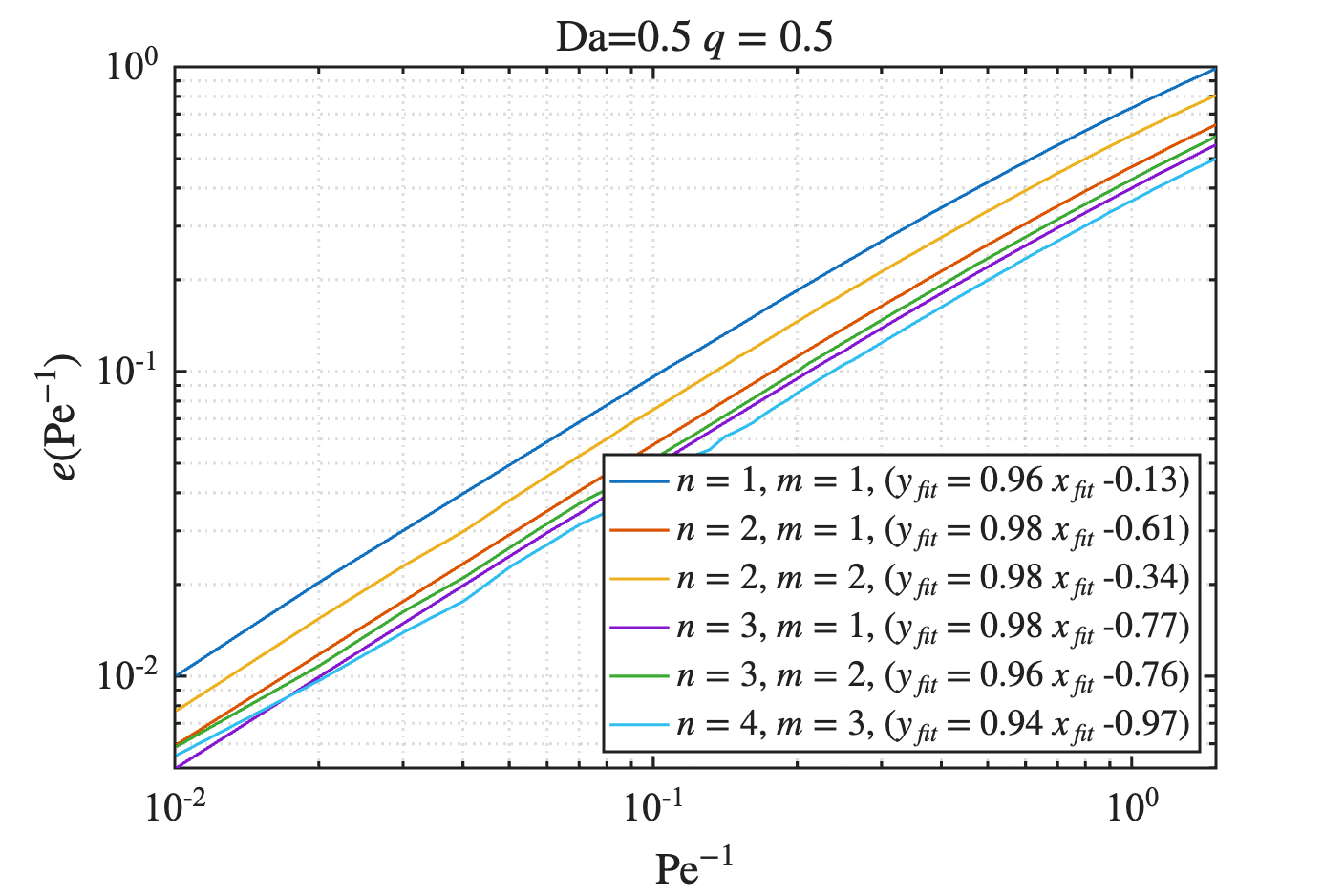}
\includegraphics[scale=0.35]{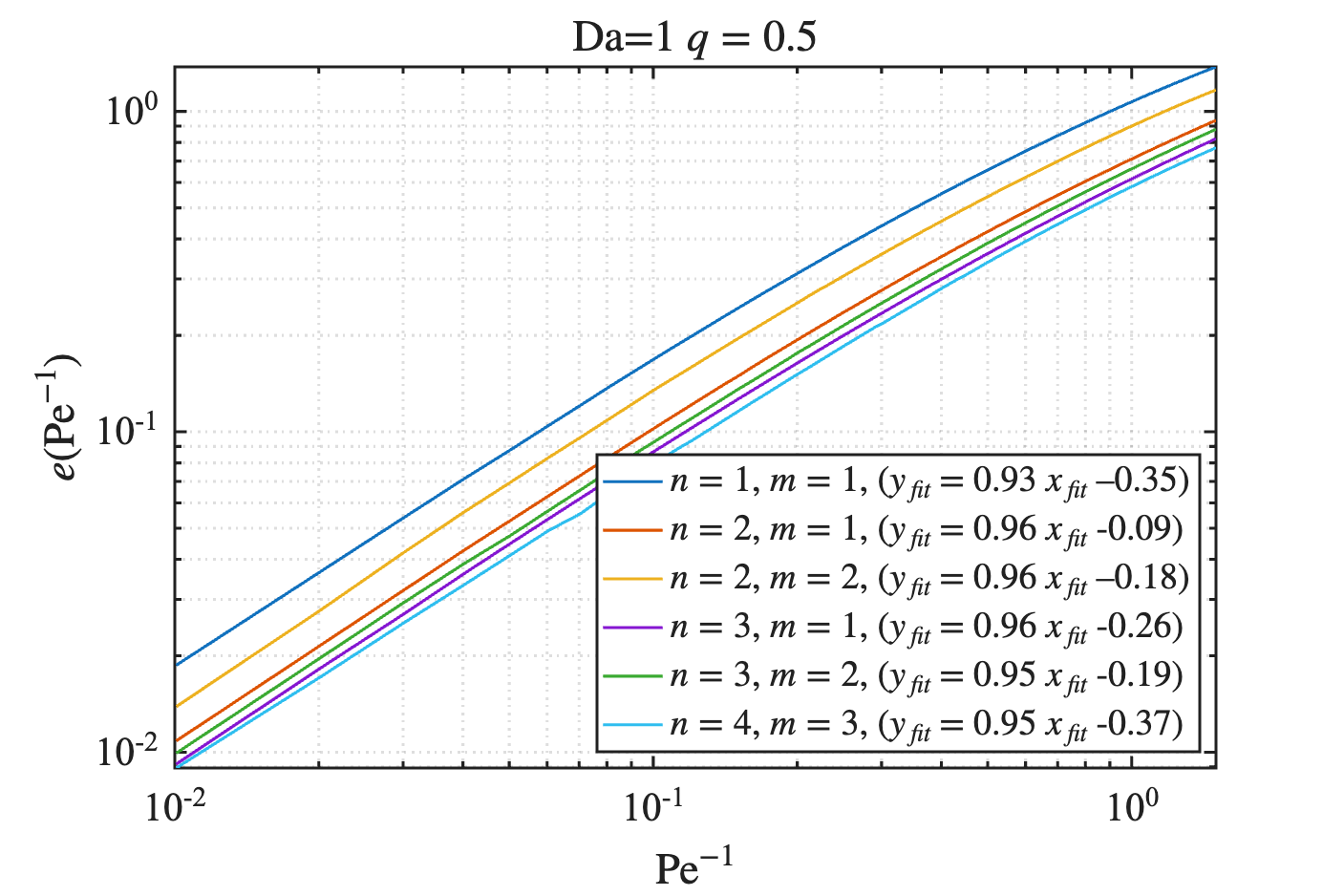}
\includegraphics[scale=0.35]{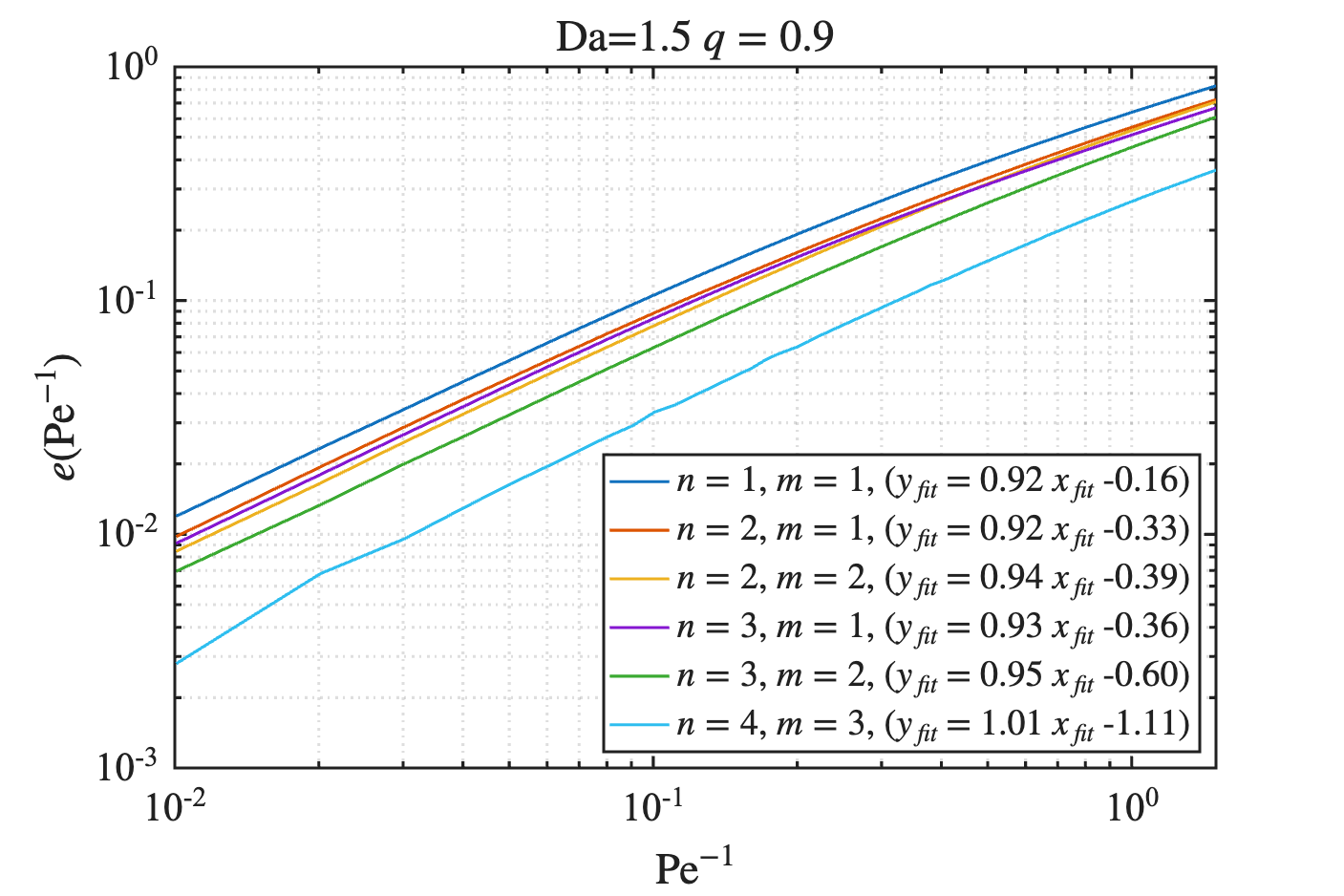
}
\caption{The $L^2$-norm of the difference between the travelling-wave profiles obtained by solving the system \eqref{eq:TW} with $F(0)=1/2$, as a function of the inverse P\'eclet number. The plots are presented in logarithmic scales, and fitting to a straight line fit for $\Pe\in[0.01,0.25]$ is provided. See Section~\ref{sec:sens} for a description of the numerical resolutions.}
	\label{Fig:L2norm}
\end{figure}

Expression \eqref{eq:ePe} provides a robust mathematical measure of the distance to the leading-order curve within a bounded domain whose limits are near the equilibrium states. Nevertheless, one of the most critical parameters in filter design is the breakthrough time, i.e., the moment at which the contaminant begins to appear in the outflow from the filter.

In practical applications, the contaminated fluid enters the filter, displacing the clean fluid while contaminant molecules are simultaneously adsorbed by the adsorbent material. Ideally, the fluid at the outlet should remain contaminant-free for as long as possible. However, once the filter becomes saturated, contaminant molecules start to break through, and the filter must be replaced. Accurately anticipating when a breakthrough will occur is of utmost importance, as the ability to predict the breakthrough time is essential for effective filter design. 

To compute the breakthrough time by means of the travelling-wave approximation, one must fix a small enough positive value for the concentration and compute the time it takes to achieve a given breakthrough concentration. This is due to the fact that the zero concentration value is an equilibrium point to which the solutions tend, but never actually reach. As a measure, we have chosen to compute the time needed to jump from a concentration of 10$^{-4}$ (0.01\% of the initial concentration) to 0.01 (1\% of the initial concentration) by solving system \eqref{eq:TW} with  $F(0)=1/2$. Denoting by $t_{b0}$ the corresponding time when the inverse P\'eclet number is neglected, and $t_{b\Pe}$ for $\Pe\neq 0$, we compute the relative error
\begin{equation}
e_{BT}(\Pe) = \frac{t_{b\Pe}-t_{b0}}{t_{b0}}\,,
	\label{eq:ePeBT}
\end{equation} 
for which we note that we do not take the absolute value since we also want to detect if $t_{b\Pe}>t_{b0}$. 

In Figure~\ref{Fig:BT1}, we present the curves of $e_{BT}(\Pe)$ over the range $\Pe \in [0,1.5]$ for various combinations of the order parameters $(m,n)$ and different values of $q$ and $\Da$. We also use a log-log scale and perform a linear fit to reveal that $e_{BR}$ linearly depends on $\Pe$. However, in this new set of simulations, we employ the whole range of $\Pe$ to obtain the linear fit. Remarkably, we observe that the errors stay mainly linear even for values of $Pe$ that are of order 1. Additionally, we observe that, in all cases, the breakthrough time with diffusion, $t_{b\Pe}$, exceeds the baseline value, $t_{b0}$. This implies that using $t_{b0}$ as a predictive estimate ensures the outlet concentration remains below 1\% of the inlet concentration, thus guaranteeing the filter is replaced before the outlet level exceeds the prescribed threshold. Therefore, the trends shown in Figure~\ref{Fig:BT1} highlight the strong robustness of the breakthrough time with respect to variations in the inverse P\'eclet number.
\begin{figure}
	\centering
	\includegraphics[scale=0.35]{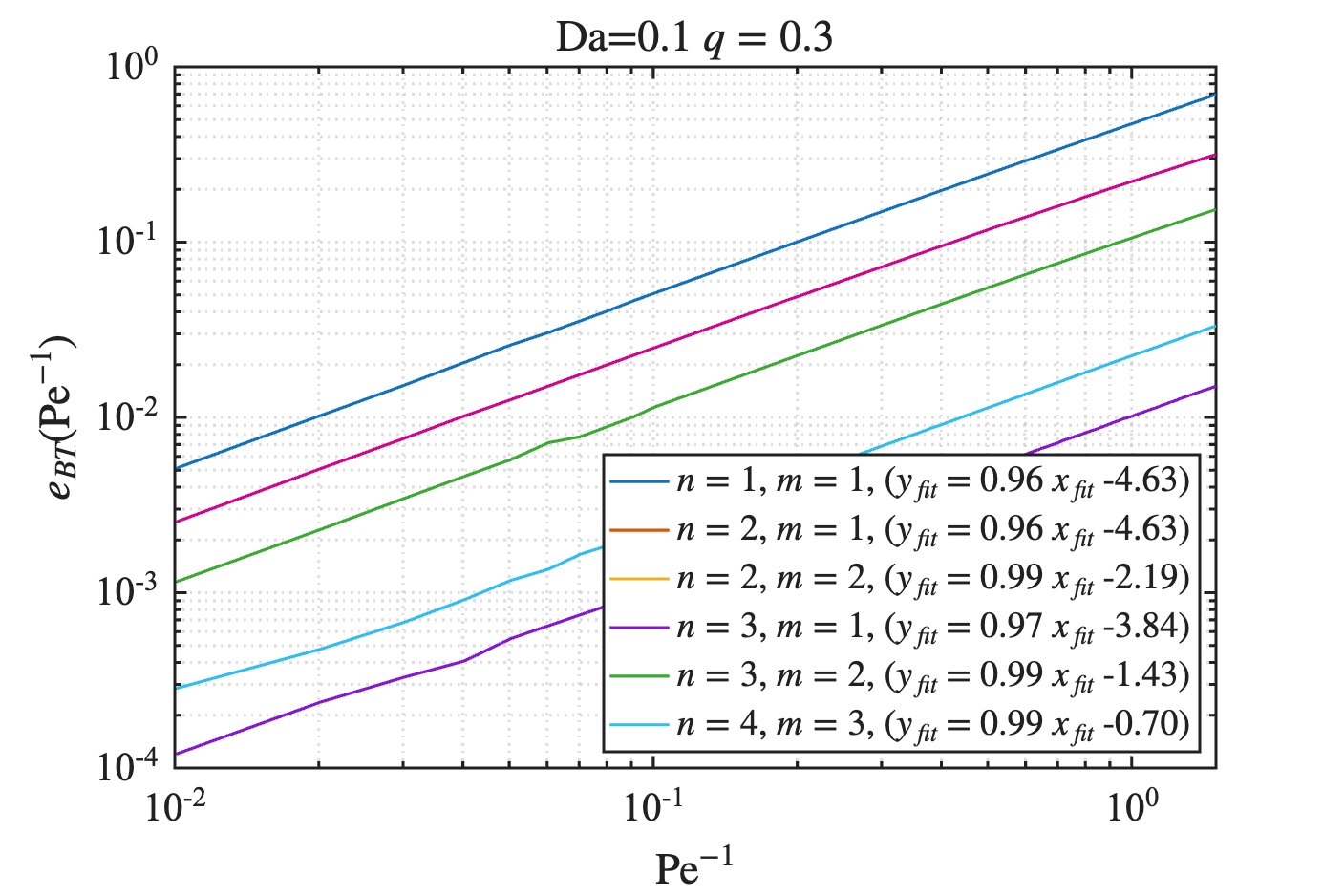}
	\includegraphics[scale=0.35]{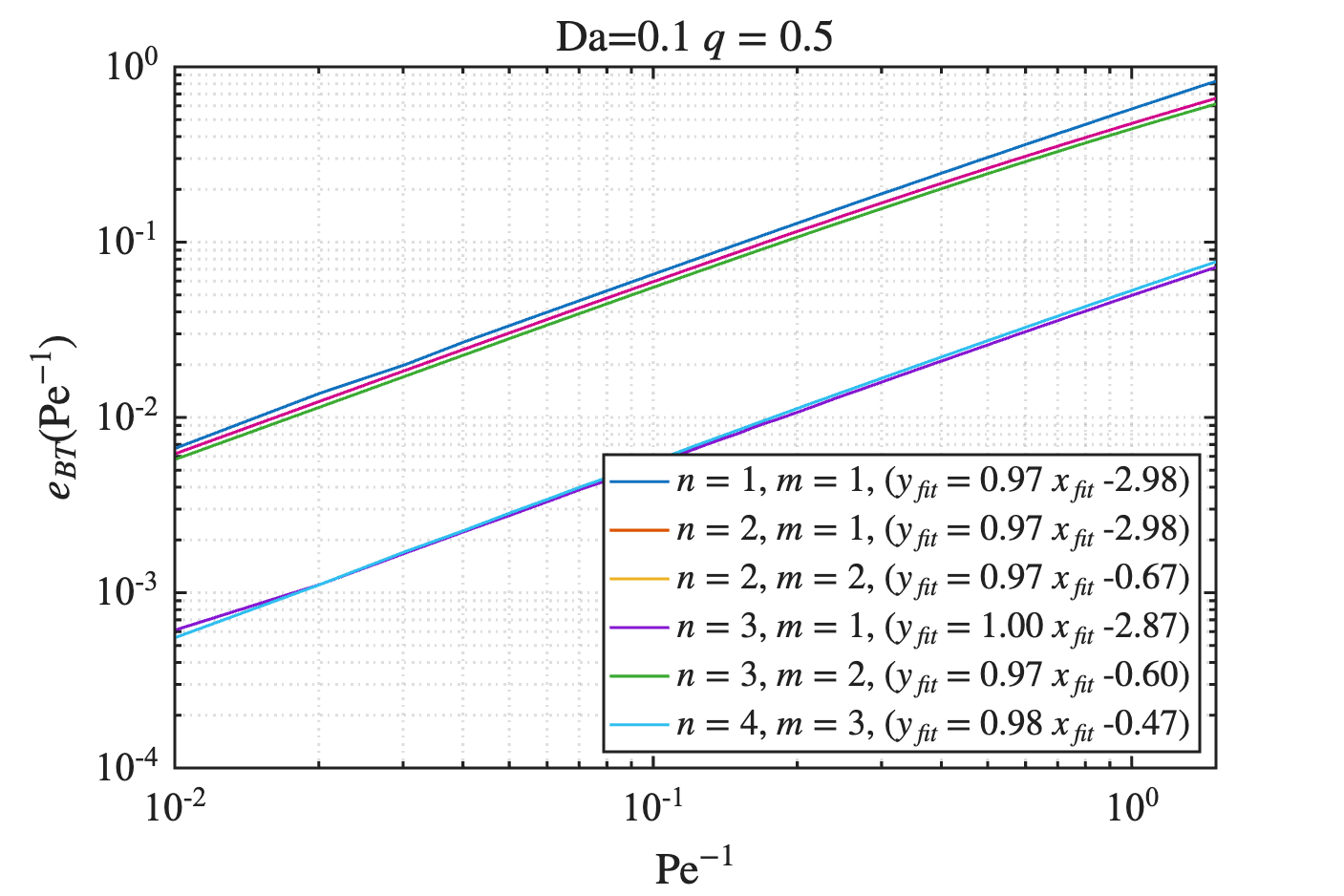}
	\includegraphics[scale=0.35]{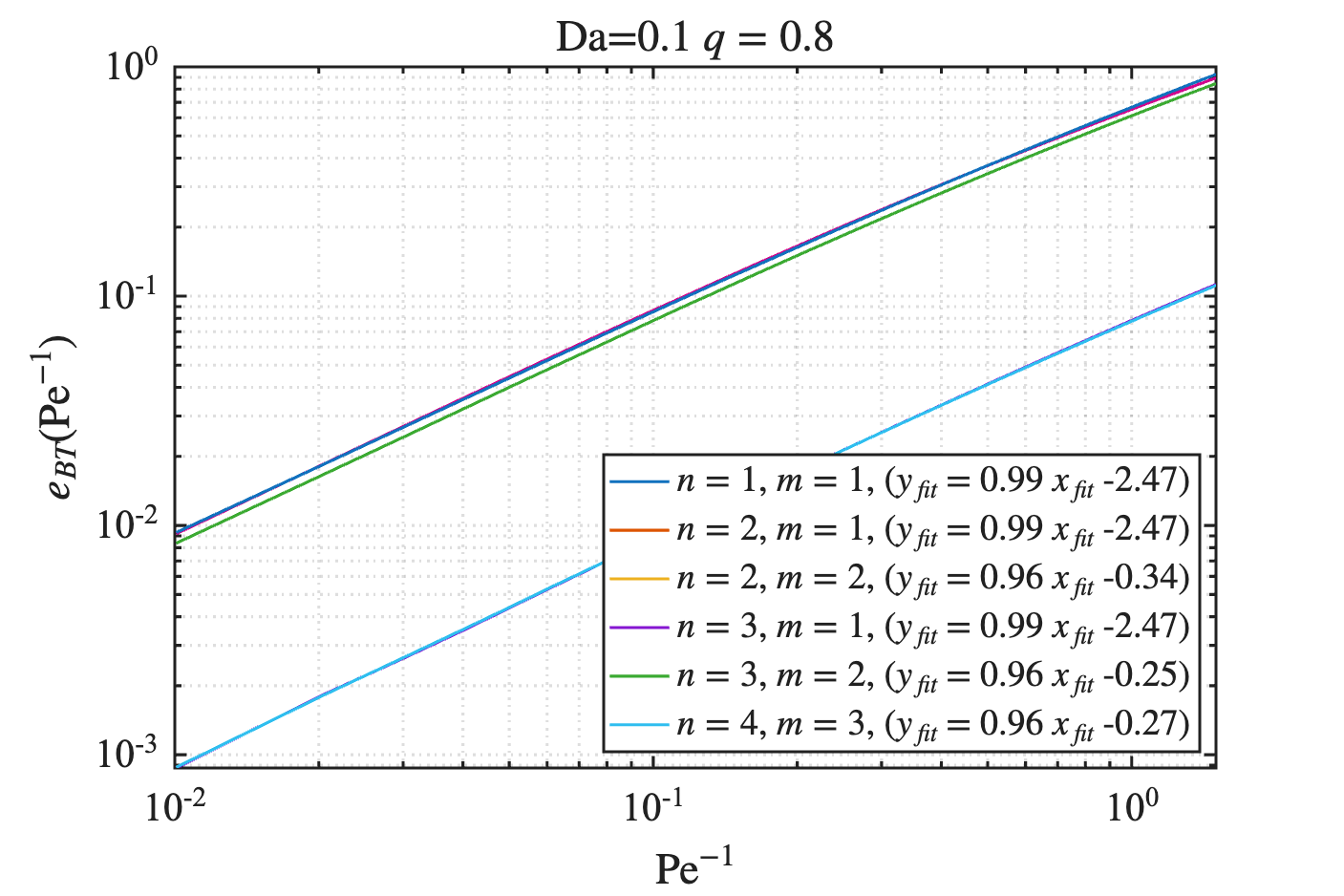}
	\includegraphics[scale=0.35]{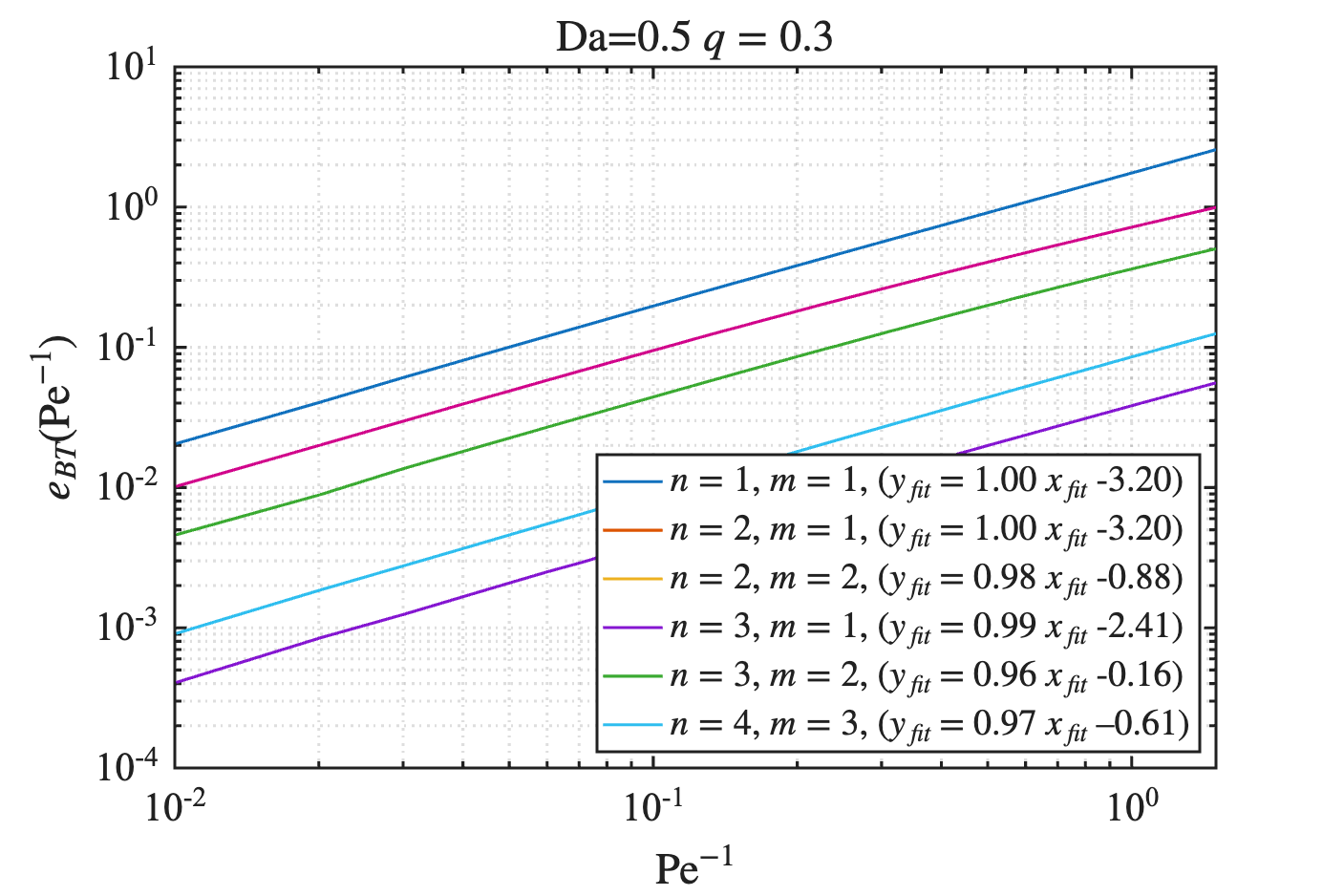}
	\includegraphics[scale=0.35]{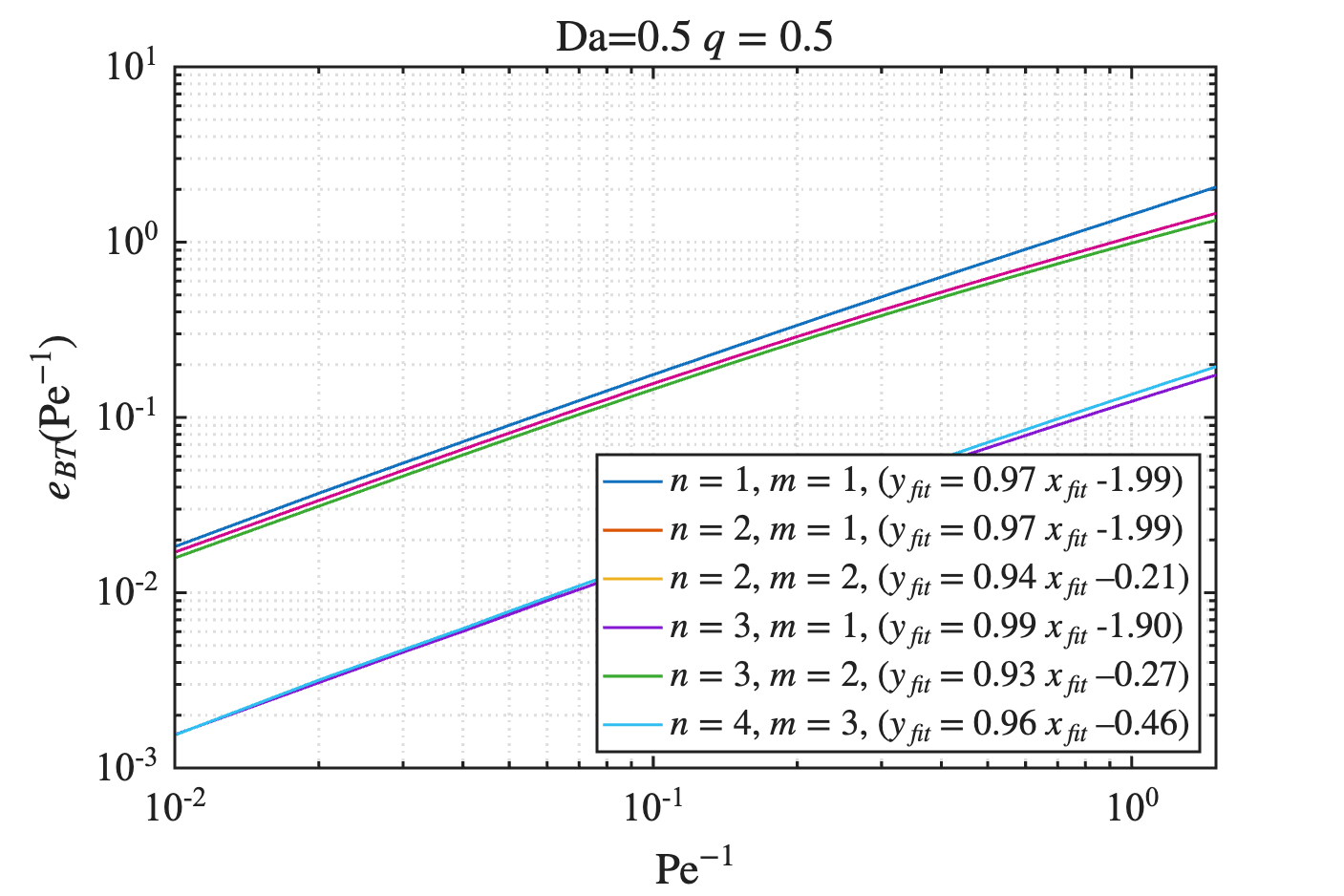}
	\includegraphics[scale=0.35]{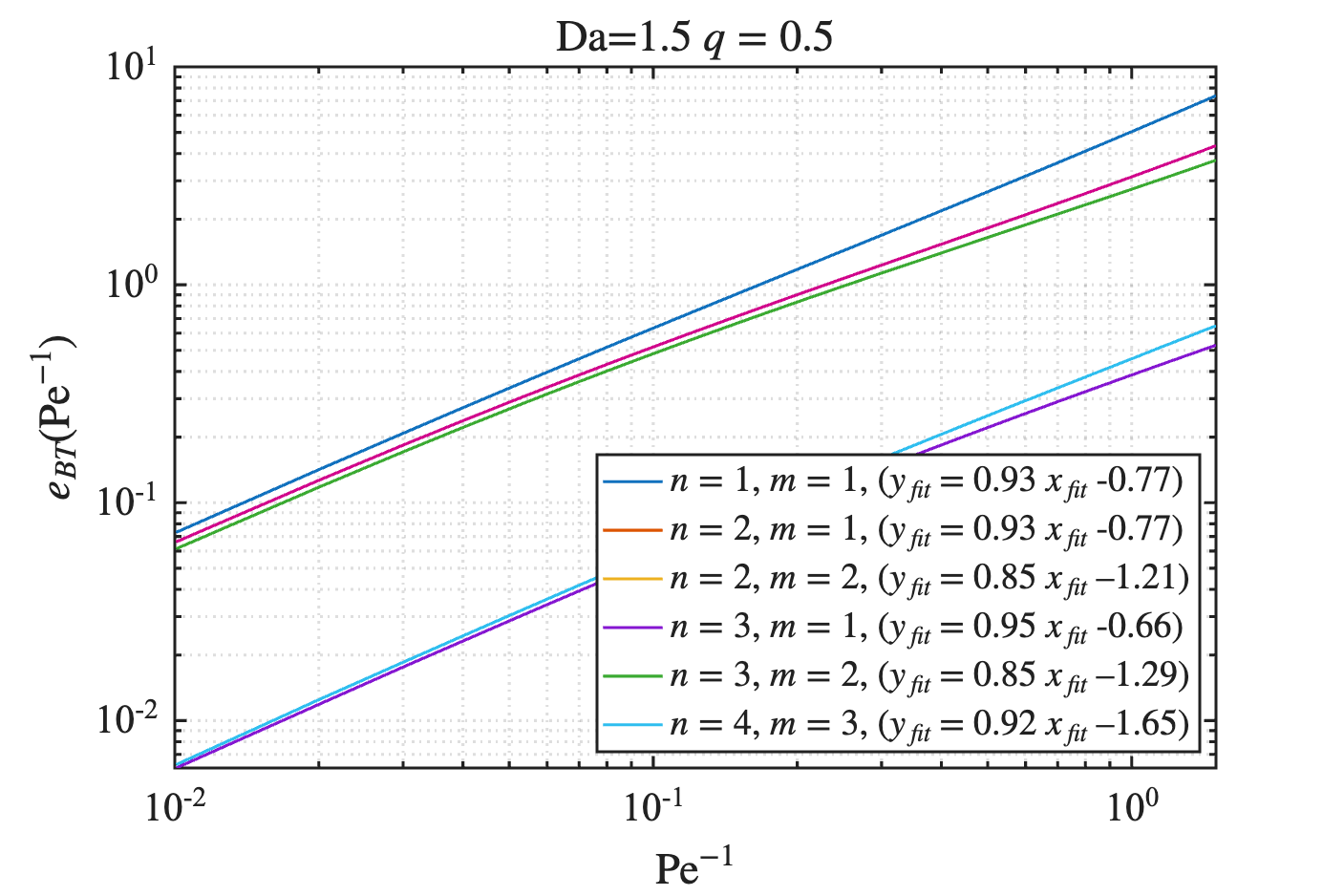}
	\caption{Relative error of the breakthrough time as a function of the inverse P\'eclet number, as defined in \eqref{eq:ePeBT} for different combinations of the order parameters, $(m,n)$, and for different values of $\Da$ and $q$. The plots are presented in logarithmic scales, and fitting to a straight line fit for $\Pe\in[0.01,0.25]$ is provided. See Section~\ref{sec:sens} for a description of the numerical resolutions. The values are obtained from the system \eqref{eq:TW} with $F(0)=1/2$. See Section~\ref{sec:sens} for a description of the numerical resolutions.}
	\label{Fig:BT1}
\end{figure}

\section{Travelling-wave exploration as solutions to the partial differential equation problem}
\label{sec:pde}

Here, we present additional evidence supporting the consistency of the travelling-wave approximation. Figures from \ref{fig:PDE1} to \ref{fig:PDE9} illustrate the evolution of the concentration, $c(x,t)$, obtained by solving system \eqref{eq:Sist_ND} using a Scharfetter-Gummel Discretization Scheme for different values of the reaction exponents, $(m,n)$, inverse P\'eclet, $\Pe$, Damk\"ohler number, $\Da$, and the saturation value of the filter, $q_e$. We choose not to plot $q(x,t)$ because it does not contribute to any other relevant information, aside from the observation that it appears to have a slightly longer transient. In what follows, all the magnitudes will be given in non-dimensional form. 

At this point, we must recall that the time and length scales depend on the constants $k_\text{ad}, k_\text{de}$ and $q_\text{max}$ which cannot be directly measured and are determined a posteriori by fitting experimental data. Therefore, the non-dimensional spatial length used in this section is arbitrary and has been selected to properly demonstrate well-formed travelling waves. One might question whether the corresponding physical column length would be sufficient to allow for the full development of these waves. In \cite{SULAYMON2014325} and \cite{Valverde24}, the authors demonstrate that, in most experiments, adsorption reactions typically occur at a much faster rate than advective transport. This implies that $\Da$ is typically very small, which suggests that the wave motion is expected to be much slower than the fluid velocity, as would be expected for a functionally effective filter:
$$v = \frac{\mathcal{L}}{\tau} \hat{v} = u_\text{in}\Da \frac{1}{q_e+\Da}\ll u_\text{in}.$$
The plots in figures~\ref{fig:PDE1} to \ref{fig:PDE9} (included in Appendix~\ref{app}) show transients that rapidly evolve toward a travelling wave profile. Furthermore, laboratory experiments are typically conducted with adsorbents designed for long-term operation, meaning the contaminant is not released  for several hours. This suggests that the waves have ample time to develop. Finally, figure~\ref{Fig:figPDE1} in Section~\ref{sec:model} demonstrates that the laboratory column length is sufficient for the waves to reach full development.

The first column in figures~\ref{fig:PDE1} to \ref{fig:PDE9} depicts the profiles $c(x,\cdot)$ at different times, $t$. In the second column, we neglect the initial transient by neglecting the curves before $0.4$ times the final time, and we shift $c(x,\cdot)$ so that all the curves intersect at the point $(x,c) = (0,1/2)$. Finally, in the third column, we represent three level sets: $c(x,t)=1/4, c(x,t)=1/2$, and $c(x,t) = 3/4$.  It is crucial to emphasise that the adsorption and desorption rates, $k_\text{ad}, k_\text{de}$, and the maximum capacity of the filter, $q_\text{max}$, are typically determined through data fitting in specific experiments. In some cases, the adsorption reaction exponents are known, while in others, the actual adsorption reaction is not well-characterised, necessitating the determination of the global orders, $m,n$, through fitting as well. Consequently, we have selected a set of values that may represent a physical experimental setting but do not specifically correspond to any particular one.    

In all these figures, one can observe that, after an initial transient, the solutions maintain their shape as they approach the outlet. As explained in Section~\ref{sec:model}, the fact that the level sets in the third column appear as parallel straight lines is consistent with the concentration profiles maintaining their shape as they move towards the outlet at a constant velocity. In fact, the velocity of the front, $v$, whose expression is given in \eqref{eq:v}, should correspond to the slopes of the level sets, which we denote by $v_s$. In table~\ref{tab:vel}, we compare $v$ with $v_s$ by computing the relative error given by
\begin{equation}
	v^*_\text{err}=\left|\frac{v-v_*}{v} \right|\cdot 100,
	\label{eq:v_err}
\end{equation}
where $v_\text{err}^*$ represents the set of velocity values obtained by fitting the level sets $c_{1/4} = c(x,t) - 1/4$, $c_{1/2} = c(x,t) - 1/2$, and $c_{3/4} = c(x,t) - 3/4$ to a straight line for various values of $q_e$, $\Da$, and $\Pe$, and for different combinations of $(m,n)$ such that $n \geq m$. For a given simulation, the values of $v_\text{err}^{1/4}$, $v_\text{err}^{1/2}$, and $v_\text{err}^{3/4}$ should be approximately equal and close to the predicted velocity given in \eqref{eq:v}. 

Figures~\ref{fig:PDE1} to \ref{fig:PDE9} demonstrate that, in general, the profiles tend to coincide around $c=1/2$, while they exhibit distinct differences in the regions closer to $c=1$, except for certain cases with $(m,n)=(3,4)$, where the profiles exhibit greater disparities around $c=0$. Table~\ref{tab:vel} indicates that the relative errors appear to be diminished for smaller values of the reaction exponents. This observation could be attributed to extended transients. Nevertheless, the errors, in overall, remain remarkably small, which further reinforces the validity of the travelling-wave approximation.

\begin{table}
\begin{center}
\begin{tabular}{ |c|c| l l l l l l l| }
\hline 
\multirow{2}{4em}{$(q_e,\Da)$} & \multirow{2}{4em}{$\Pe$} & & \multicolumn{6}{c|}{$v^*_\text{err}$ for $(m,n):$} \\
\cline{3-9}		
& &  & $(1,1)$ & $(1,2)$ & $(1,3)$ & $(2,2)$ & $(2,3)$ & $(3,4)$\\
\hline
\multirow{12}{4em}{$(0.7, 1)$} & \multirow{3}{4em}{0.1} & $v_\text{err}^{1/4}$ & 0.69  &  0.19 & 0.56 & \textbf{1.47} &  \textbf{0.92} & \textbf{2.79} \\
& &  $v_\text{err}^{1/2}$ & \textbf{0.37} &   \textbf{0.05} &   \textbf{0.14}  & 3.93 &    3.67  &  7.26\\
& &  $v_\text{err}^{3/4}$ & 1.65 & 0.30 & 0.77 & 6.12 &  6.59&   14.80\\
\cline{2-9} 
& \multirow{3}{4em}{0.5} & $v_\text{err}^{1/4}$ &1.56  & 0.63& 0.85 & \textbf{0.85}   & \textbf{0.65} & \textbf{3.79} \\ 
& & $v_\text{err}^{1/2}$ &\textbf{0.51} & \textbf{0.14} &\textbf{0.16}& 4.25& 4.16 &7.99\\
& & $v_\text{err}^{3/4}$ &3.14 &0.84& 1.32 &7.63 &7.99& 16.43
 \\
 \cline{2-9}
& \multirow{3}{4em}{1} & $v_\text{err}^{1/4}$ & 3.11  & 1.59  & 1.63 & \textbf{0.33} & \textbf{0.00} & \textbf{5.30}  \\ 
& & $v_\text{err}^{1/2}$ & \textbf{0.64} & \textbf{0.39} &\textbf{0.22} &4.51 &4.70   &8.83\\
& & $v_\text{err}^{3/4}$ & 5.69 & 2.25 &2.68 &9.97 &10.20 &18.91\\
\cline{2-9}
& \multirow{3}{4em}{1.5} & $v_\text{err}^{1/4}$ & 4.70 & 2.79& 2.56 &\textbf{1.62} &\textbf{0.78} &\textbf{6.64}\\
& &  $v_\text{err}^{1/2}$ & \textbf{0.72}  & \textbf{0.62}  & \textbf{0.33}  &4.62 &5.04 & 9.38\\
& & $v_\text{err}^{3/4}$ &8.10 & 6 4.04  & 4.30 & 12.13 & 12.28 &21.08\\
\hline
\multirow{12}{4em}{$(0.9, 0.1)$} & \multirow{3}{4em}{0.1} & $v_\text{err}^{1/4}$  & \textbf{0.01} & 0.77 & 9.00 & \textbf{3.13} & 7.32 & 19.51 \\
& & $v_\text{err}^{1/2}$ & 0.03&   \textbf{0.37}& 7.11& 6.73&  \textbf{0.76}& \textbf{3.16} \\
& & $v_\text{err}^{3/4}$ & 0.07& 0.96& \textbf{0.33}& 11.45& 10.78& 14.54
 \\
\cline{2-9}
& \multirow{3}{4em}{0.5} & $v_\text{err}^{1/4}$  & \textbf{0} & 0.76&    8.95 & 3.45&   7.03 &19.17  \\
& & $v_\text{err}^{1/2}$ & 0.07& \textbf{0.29}& 6.97& \textbf{7.28} &\textbf{0.21} &\textbf{2.58} \\ 
& & $v_\text{err}^{3/4}$ & 0.14 & 1.20 & \textbf{0.54} & 12.31 & 11.53 & 15.34 \\
\cline{2-9}
& \multirow{3}{4em}{1} & $v_\text{err}^{1/4}$  & \textbf{0.06} &0.72 &8.90 &\textbf{3.80}& 6.69& 18.83  \\ 
& & $v_\text{err}^{1/2}$ & 0.18 &\textbf{0.16} &6.65 &8.10& \textbf{0.69}& \textbf{1.60} \\ 
& & $v_\text{err}^{3/4}$ & 0.38 &1.61 &\textbf{1.16} &13.88 &12.95 &16.77 \\
\cline{2-9}
& \multirow{3}{4em}{1.5} & $v_\text{err}^{1/4}$  & \textbf{0.11}& 0.76& 8.86& \textbf{4.20}& 6.39& 18.71 \\
& & $v_\text{err}^{1/2}$ & 0.35& \textbf{0.03}& 6.27& 9.00& \textbf{1.59}& \textbf{0.80}\\ 
& & $v_\text{err}^{3/4}$ & 0.80& 2.14& \textbf{2.01}& 15.46& 14.58& 18.32\\
\hline
\multirow{12}{4em}{$(0.9, 1.5)$} & \multirow{3}{4em}{0.1} & $c_{1/4}$  & 3.93& 3.78& 4.05& 3.21& 3.73& 5.18 \\ 
& & $v_\text{err}^{1/2}$ & \textbf{0.89}& \textbf{1.20}& \textbf{1.06}& \textbf{0.04}& \textbf{0.08}& \textbf{0.49} \\
& & $v_\text{err}^{3/4}$ & 3.46& 3.02& 3.39& 4.22& 4.66& 5.70\\
\cline{2-9}
& \multirow{3}{4em}{0.5} & $v_\text{err}^{1/4}$  & 6.76& 6.73& 6.79& 5.84& 6.18& 7.50 \\
& & $v_\text{err}^{1/2}$ & \textbf{1.00}& \textbf{1.46}& \textbf{1.27}& \textbf{0.18}&    \textbf{0} &   \textbf{0.55} \\ 
& & $v_\text{err}^{3/4}$ & 6.36& 5.87& 6.12& 6.94& 7.32& 8.47\\
\cline{2-9}
& \multirow{3}{4em}{1} & $v_\text{err}^{1/4}$  & 10.12& 10.24& 10.13& 9.15& 9.27& 10.56 \\ 
& & $c_{1/2}$ & \textbf{1.11}& \textbf{1.67}& \textbf{1.48}& \textbf{0.41}& \textbf{0.16}& \textbf{0.49}
 \\ 
& & $v_\text{err}^{3/4}$ & 9.80& 9.29& 9.45& 10.27& 10.55& 11.73  \\
\cline{2-9}
& \multirow{3}{4em}{1.5} & $v_\text{err}^{1/4}$  & 12.86& 13.05& 12.87& 11.90& 11.88& 13.14 \\
& & $v_\text{err}^{1/2}$ & \textbf{1.27}& \textbf{1.86}& \textbf{1.69}& \textbf{0.64}& \textbf{0.37}& \textbf{0.34}\\ 
& & $v_\text{err}^{3/4}$ & 12.43& 11.92& 12.05& 12.86& 13.08& 14.26 \\
\hline
\end{tabular}
\end{center}
\caption{Comparison between the expression \eqref{eq:v} and the numerically computed front velocity (see Section~\ref{sec:pde}). The errors are quantified using the expression \eqref{eq:v_err}. For each simulation, the corresponding minimum error value is indicated in bold. This table should be interpreted in conjunction with the figures from~\ref{fig:PDE1} to~\ref{fig:PDE9}.}\label{tab:vel}	
\end{table}

\subsection{Solutions with non-zero initial conditions}

To derive the travelling-wave approximation, we assumed that the filter was initially fresh, i.e., $q(x,0)=0$ for all $x$, and the air within the filter was clean, i.e., $c(x,0)=0$ for all $x$. However, it is relevant to consider the scenario where neither of these conditions is satisfied. For instance, if one intends to recycle a filter that has been utilised but not yet exhausted, or if, for some reason, the air within the filter is initially not completely devoid of contaminant molecules. 

To gain insights into how concentration and adsorbed fraction would evolve in these two scenarios, we conducted two new sets of simulations with non-zero initial conditions. In the first set (first two rows in Figure~\ref{Fig:NonZIC}), we assume the air inside the filter has a uniform concentration of half the initial one, i.e., $c(x,0)=0.5$, while the adsorbent material is fresh, i.e., $q(x,0)=0$. In contrast, in the second set (third and fourth rows in Figure~\ref{Fig:NonZIC}), we assume the air inside the filter is clean, i.e., $c(x,0)=0$, but the adsorbent has a uniform adsorbed fraction of 0.4, i.e., $q(x,0) = 0.4$. 

We observe that when the filter is clean, but the inside concentration is initially non-zero (first and second rows in Figure~\ref{Fig:NonZIC}), the adsorbed fraction very rapidly jumps to a constant value. This happens at a very fast timescale because we are dealing with a partial differential equation of parabolic type, which are well known to have an infinite propagation speed. As for the concentration, contrary to what one might expect, it appears to evolve like a travelling wave, but the limit as $x\to\infty$ is neither zero nor 0.5, but another value smaller than the initial concentration inside the filter. This indicates that the filter will not be able to remove the whole initial amount of contaminant, but still the fluid at the breakthrough will have a lower concentration than the initial one for some time. The two constant values for $c$ and $q$ as $x\to\infty$ are related by the isotherm. However, further analysis is required in order to predict what these values will be.

When the air inside the filter is clean, but the adsorbent is not purely fresh (as shown in the third and fourth rows of Figure~\ref{Fig:NonZIC}), a similar phenomenon occurs. The solutions appear to converge to a constant non-zero value at infinity, except for the concentration when $m=1$ and $n=2$, which seems to vanish (or approach a very small value). 

These simulations highlight the importance of initial conditions, but a comprehensive analysis is beyond the scope of the current paper. Therefore, we leave this for future work. Additionally, the role of reaction exponents appears to be crucial in determining breakthrough values.

\section{Conclusions}
\label{sec:conc}

In this paper, the travelling-wave approximation is shown to accurately describe filter performance when the inverse P\'eclet number is small (see, for example, \cite{AGUARELES2023}, \cite{AUTON2024827}). We rigorously demonstrate that the solutions obtained by neglecting the inverse P\'eclet number (i.e., the leading-order approximation) are, in fact, the limiting solutions of the original system of equations given in \eqref{eq:TW} or \eqref{eq:fullODE}. Moreover, our numerical analysis reveals excellent agreement between the leading-order approximation and the solutions of \eqref{eq:TW}, even for moderate values of $\Pe$. 

The existence of travelling-wave solutions in the full system is proved by extending beyond the leading-order approximation. The governing equation is reformulated as a slow-fast system, where variables evolve at different time scales. The key point is the identification of heteroclinic connections, which represent transitions between equilibrium points representing the states $F=1$ and $F=0$. The study first confirms the presence of such connections and provides an implicit solution for the leading-order approximation where rapid transitions take place.  Through analytical continuation methods, it is demonstrated that the solutions of the full system remain close to those of the leading-order approximation, provided $\Pe$ is sufficiently small. This confirms that, even when diffusion effects are present, the leading-order approximation remains a valid predictive tool and the travelling wave persists for small but non-zero values of the inverse P\'eclet number. Figures \ref{Fig:solMaria_perNM_EspaiFases_FrontOna_A} and \ref{Fig:solMaria_perNM_EspaiFases_FrontOna_B} show that even when the inverse P\'eclet number is of order one, the similarity between the curves remains remarkably high. This is particularly important because the exact value of $\Pe$ is not always well-defined in experimental settings.

The breakthrough time is a critical parameter that must be determined with precision. Replacing adsorption materials incurs high economic and environmental costs, and industries cannot afford the risks associated with unintended contaminant emissions. As shown in Figure~\ref{Fig:BT1}, the leading-order approximation provides a reasonably accurate prediction of breakthrough times. Although the estimated values are lower than those obtained when $\Pe > 0$, they remain sufficiently close to serve as practical estimates.

Also, a key aspect to consider when developing a model that fits experimental data is the description of the adsorption reaction. In many cases, the physicochemical processes involved are not entirely understood, and the reaction orders cannot be determined \emph{a priori}. In such situations, one could consider fitting the order parameters $(m,n)$ to the available data using the leading-order approximation. The numerical and theoretical analyses presented in this work support and validate such a fitting approach. In fact, section~\ref{sec:pde} highlights the importance of the reaction exponents, along with the initial filter conditions on the final filter's performance.

Overall, the analysis performed highlights the importance of considering slow-fast dynamics in the study of travelling waves in this model. The persistence of these solutions reinforces the relevance of the approximation techniques used, while also providing a framework to extend the analysis to more complex scenarios. This study completes the work initiated in \cite{AGUARELES2023} and establishes a framework for the applicability of the results presented there.

\section*{Author contributions}
MA, JA-L, EB: Conceptualization, Methodology, Investigation, Formal analysis, Writing – original draft,  Writing – review \& editing. JA-L: Software. MA, EB: Supervision.

\section*{Acknowledgements}

This publication is part of the research projects PID2023-146332OB-C22 (funding the three authors), financed by the Spanish \emph{Agencia Estatal de Investigación} of \emph{Ministerio de Ciencia, Innovación y Universidades} and TED2021-131455A-I00 of the Agencia Estatal de Investigación (Spain) (funding E. Barrabés and M. Aguareles). M. Aguareles acknowledges the support of the “consolidated research group” (Ref 2021 SGR01352) of the Catalan Ministry of Research and Universities.  E. Barrab\'es is supported by the Spanish grant PID2021-123968NB-I00 (AEI/FEDER/UE). 
\section*{Conflict of interest}
All authors declare no conflicts of interest in this paper.

\end{document}